\title{Temporal Cycle Detection and Acyclic Temporization}
\author{Davi de Andrade}{Universidade Federal do Ceará, Brazil}{daviandradeiacono@gmail.com}{}{}
\author{Júlio Araújo}{Universidade Federal do Ceará, Brazil}{julio@mat.ufc.br}{}{}
\author{Allen Ibiapina}{IRIF, CNRS \& Université Paris Cité, France}{Allen-Roossim.Passos-Ibiapina@irif.fr}{}{Supported by the French ANR, project ANR-22-CE48-0001 (TEMPOGRAL)}
\author{Andrea Marino
}{Universita degli Studi di Firenze, Italy}{andrea.marino@unifi.it}{}{Partially funded by Italian PNRR CN4 Centro Nazionale per la Mobilità Sostenibile, NextGeneration EU - CUP,  B13C22001000001. MUR of Italy, under PRIN Project n. 2022ME9Z78 - NextGRAAL: Next-generation algorithms for constrained GRAph visuALization, PRIN PNRR Project n. P2022NZPJA - DLT-FRUIT: A user centered framework for facilitating DLTs FRUITion.}
\author{Jason Schoeters
}{Universita degli Studi di Firenze, Italy}{jason.schoeters@unifi.it}{}{Supported by the French ANR, project ANR-22-CE48-0001 (TEMPOGRAL), and partially funded by Italian PNRR CN4 Centro Nazionale per la Mobilità Sostenibile, NextGeneration EU - CUP,  B13C22001000001. MUR of Italy, under PRIN PNRR Project n. P2022NZPJA - DLT-FRUIT: A user centered framework for facilitating DLTs FRUITion.}
\author{Ana Silva
}{Universidade Federal do Ceará, Brazil}{anasilva@mat.ufc.br}{}{}
\date{}
\definecolor{ForestGreen}{RGB}{34,139,34}
\definecolor{corsimcornao}{RGB}{153, 0, 153}
\newcommand{\tG}{\mathcal{D}\xspace}
\newcommand{\tpath}[1]{temporal #1-path\xspace}
\newcommand{\simplecycle}{simple-cycle\xspace}
\newcommand{\weakcycle}{weak-cycle\xspace}
\newcommand{\strongcycle}{strong-cycle\xspace}
\newcommand{\simplecycles}{simple-cycles\xspace}
\newcommand{\weakcycles}{weak-cycles\xspace}
\newcommand{\strongcycles}{strong-cycles\xspace}
\newcommand{\CycleDetection}{\textsc{Cycle Detection}\xspace}
\newcommand{\AcyclicTemporization}{\textsc{Acyclic Temporization}\xspace}
\newcommand{\ThreeSAT}{\textsc{3-SAT}\xspace}
\newcommand{\lexicographicTemporization}{lexicographic temporization\xspace}
\keywords{temporal graphs, search algorithms, connectivity, cycles, directed acyclic graphs, detection, temporization, NP-completeness, fixed-parameter tractability, polynomial-time algorithms, bounded lifetime}
\authorrunning{
	D. de Andrade, J. Araújo, A. Ibiapina,
	A. Marino,
	J. Schoeters,
	A. Silva}
\begin{document}

\maketitle

\begin{abstract}
    In directed graphs, a cycle can be seen as a structure that allows its vertices to loop back to themselves, or as a structure that allows pairs of vertices to reach each other through distinct paths. We extend these concepts to temporal graph theory, resulting in multiple interesting definitions of a ``temporal cycle". For each of these, we consider the problems of \textsc{Cycle Detection} and \textsc{Acyclic Temporization}. For the former, we are given an input temporal digraph, and we want to decide whether it contains a temporal cycle. Regarding the latter, for a given input (static) digraph, we want to time the arcs such that no temporal cycle exists in the resulting temporal digraph. We're also interested in \textsc{Acyclic Temporization} where we bound the lifetime of the resulting temporal digraph. Multiple results are presented, including polynomial and fixed parameter tractable search algorithms, polynomial-time reductions from \textsc{3-SAT} and \textsc{Not All Equal 3-SAT}, and temporizations resulting from arbitrary vertex orderings which cover (almost) all cases. 
\end{abstract}

\newpage
\setcounter{page}{1}

\section{Introduction}

A \emph{temporal digraph  with lifetime $\tau$} is a pair $\mathcal{D} = (D,\lambda)$ where  $D$ is a directed graph (or digraph), called \emph{underlying} digraph, and $\lambda$ is a function from $A(G)$ to $2^{[\tau]}$, called \emph{time function}, or \emph{temporization}. 
Temporal graphs are powerful for analyzing dynamic relationships and patterns over time. They are widely applied in social networks (e.g., trend detection, influencer analysis), epidemiology (disease spread modeling), and transportation (route optimization), and in general in contexts where evolving connections are key to understanding behavior, predicting events, and optimizing performance~\cite{M.16,Netal.13,Holme.15,LVM.18}.

In temporal digraphs, a path\footnote{All paths are considered to be directed paths.} from a vertex $x$ to a vertex $y$, called a \emph{\tpath{$x,y$}}, is meaningful only if the times on its arcs follow a strictly increasing or non-decreasing 
sequence. The former, known as \emph{strict model}, 
is applied, for example, in the representation of a public transportation system where each arc in the path corresponds to a bus or train that must be taken at a time which is later than the previous transport. The latter, called \emph{non-strict model}, allows transitions to occur instantaneously. 
This is useful for scenarios like daily route availability, where multiple routes may be traversed in the same day if they are accessible.

\textbf{Cycles in static digraphs.} In static digraphs, a cycle is a simple non-trivial path 
that starts and finishes at the same vertex.
Cycles are fundamental structures within digraphs and they come up in a wide variety of applications, from computer science to engineering, biology, and social network analysis. For example, cycles are important in network routing to avoid routing loops and enhance efficiency. In operating systems and databases, deadlocks can be represented as cycles in a resource-allocation graph. In  biochemical networks and protein interaction networks, cycles can represent feedback loops or recurring processes.  The study of cycles, cycle detection, and cycle characterization is therefore central to graph theory and its applications in the real world.
The following \textbf{fundamental properties} of cycles in static digraphs trivially hold and are equivalent in the static context:
\emph{(i)} there exists a vertex $x$ in the cycle such that from $x$ we can traverse the cycle and go back to $x$; \emph{(ii)} there exists a pair of vertices $x,y$ in the cycle, such that $x$ is able to reach $y$ and $y$ is able to reach $x$ using the arcs involved in the cycle; \emph{(iii)} for every vertex $x$ in the cycle, starting from $x$, we can traverse the cycle and go back to $x$; and \emph{(iv)} for every pair of vertices $x,y$ in the cycle, $x$ is able to reach $y$ and $y$ is able to reach $x$ using the arcs involved in the cycle. As said, for static digraphs, all of the four statements are equivalent and, note that \emph{(iii)} is the \emph{for every} version of \emph{(i)} and \emph{(iv)} is the \emph{for every} version of \emph{(ii)}.

\textbf{Cycle Definitions in Temporal graphs.}
Inspired by the properties \emph{(i)}-\emph{(iv)} above, we can define cycles in temporal digraphs, looking for cycles in the underlying digraph whose times satisfy such properties. Interestingly, while these properties are equivalent in static digraphs, in temporal digraphs they differ, and it makes sense to study both the \emph{for every} and the \emph{there exists} variations.

We define the following types of cycles, considering (now and in the rest of the paper) only non-trivial temporal paths. In particular, given a temporal digraph $\tG = (D,\lambda)$ and a cycle $C$ of $D$,  we say that $C$ is a temporal:

\begin{description}
    \item[\simplecycle] if there exists a \tpath{$x,x$} $P$ such that $E(P) = E(C)$, for some $x\in V(C)$;
    \item[\weakcycle] if there exist a \tpath{$x,y$} $P$ and a \tpath{$y,x$} $P'$ such that $E(P) \cup E(P') = E(C)$, for some pair $x,y\in V(C)$;
    \item[\strongcycle] if there exists a \tpath{$x,x$} $P$ such that $E(P) = E(C)$, for every $x\in V(C)$.
\end{description}

\begin{figure}[t]
\centering
\scalebox{0.9}{
\begin{subfigure}[b]{0.3\textwidth}
\centering\includegraphics[height=2.5cm]{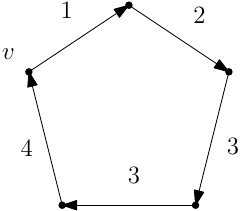}
\caption{}
\end{subfigure}%
\hfill
\begin{subfigure}[b]{0.3\textwidth}
\centering\includegraphics[height=2.5cm]{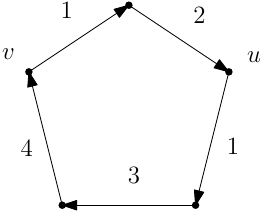}
\caption{}
\end{subfigure}%
\hfill
\begin{subfigure}[b]{0.3\textwidth}
\centering\includegraphics[height=2.5cm]{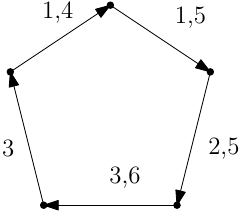}
\caption{}
\end{subfigure}%
}
\caption{\label{fig:cycles}Examples of a \simplecycle, \weakcycle, and \strongcycle respectively, using the non-strict model.\vspace{-0.3cm}}
\end{figure}

See Figure~\ref{fig:cycles} for an example using the non-strict model. Figure~\ref{fig:cycles}a corresponds to a \simplecycle as there is a vertex, namely $v$, able to go back to itself. 
Figure~\ref{fig:cycles}b corresponds to a \weakcycle, as there is a pair of vertices, namely $v$ and $u$, such that $v$ is able to go to $u$ and $u$ is able to go to $v$. Note that this is not a \simplecycle because the two paths do not compose to allow $v$ (or any other vertex) to go back to itself. Finally, Figure~\ref{fig:cycles}c corresponds to a \strongcycle as every vertex is able to go back to itself.

Note that the definitions of simple, weak, and strong cycles correspond to properties \emph{(i)}, \emph{(ii)}, and \emph{(iii)}, respectively. We miss only property \emph{(iv)}, which can be regarded as the \emph{for every} version of a weak cycle. However, it is easy to prove that this in fact would be equivalent to a \strongcycle. 
Note additionally that every \strongcycle is also a \simplecycle, as the former is the \emph{for every} version of the latter, and that every {\simplecycle} $C$ is a {\weakcycle} (just consider $y=x$).

\textbf{Problem definition.}
The first problem that arises naturally, is the detection of our temporal cycles, as described next where Type \textsc{T} can be \simplecycle, \weakcycle, or \strongcycle. \\
\begin{minipage}{\textwidth}
\begin{mdframed}
    \underline{{T} \CycleDetection}\smallskip\\
\textbf{Input:} Temporal digraph $\tG$ \smallskip\\
\textbf{Question:} Does $\tG$ contain a temporal cycle of Type \textsc{T}?
\end{mdframed} 
\end{minipage}

We are interested in the time complexity of such problems. 
Note that we do not require the temporal cycle to be of at least some given size $k$, since this would be trivially \NP-complete, by reducing from \textsc{Hamiltonian Cycle}.

Detecting a temporal cycle can also be seen as recognizing whether the given temporal graph is acyclic, which relates to the problem of recognizing DAGs\footnote{Used short for Directed Acyclic Graph.} in the static context, which can be done easily through a search algorithm. Observe that constructing a DAG $D$ from a given graph $G$, i.e., orienting the edges of $G$ so that $D$ does not contain any cycle, can be trivially done by picking an ordering of $V(G)$ and orienting all edges from smaller to bigger vertices. 
When adapted to the temporal context, such orientation would clearly still work, but what happens when the digraph is already known and, instead, we want to find a time function that produces a temporal DAG? We then propose the DAG construction problem on the temporal context, presented below. 
We bring attention to the fact that DAGs are perhaps the most important class of digraphs, given that they not only model many  practical applications (e.g. scheduling~\cite{Sapatnekar.book}, version history~\cite{Bartlang.book}, causal networks like Bayesian Networks~\cite{shmulevich.book}, etc), but also have interesting structural properties that lead to efficient algorithms (e.g. single source shortest paths and longest path~\cite{Cormen.book}, $k$ disjoint paths for fixed $k$~\cite{FHW.80}), and inspires a series of width measures that try to mimic the successful treewidth concept on undirected graphs~(see e.g.~\cite{Getal.14,JRST.82}).
Hence, concerning constructing DAGs, the problem we deal with is the one of assigning a time function to the arcs of a digraph, i.e. give a \emph{temporization} to its arcs, in order to avoid temporal cycles to appear. 
In the following definition, we recall that Type \textsc{T} can be \simplecycle, \weakcycle, or \strongcycle.\\
\begin{minipage}{\textwidth}
\begin{mdframed}
    \underline{T \AcyclicTemporization}\smallskip\\
\textbf{Input:} (Static) digraph $D$ \smallskip\\
\textbf{Question:} Does there exist a temporization $\lambda : A(D) \to 2^\mathbb{N} \setminus \{\emptyset\}$ such that $\tG = (D, \lambda)$ admits no temporal cycles of Type T? 
\end{mdframed} 
\end{minipage}

We do not allow the empty set to be assigned to arcs for the clear reason that doing so for all arcs would be a trivial solution. Note however that we can assume any solution to have exactly one time per arc, since adding more could only create more temporal cycles. This is why, whenever we talk about acyclic temporization, we write simply the number $i$ instead of $\{i\}$ when assigning such a set as the time function of some arc.

\textbf{Our contributions.}
Our results for the non-strict model are summarized in \Cref{fig:results}. Starting with \CycleDetection in \Cref{sec:cycledetection}, we present polynomial-time algorithms to detect \weakcycle{s} and \simplecycle{s}, both using a temporal search algorithm as a subroutine. For \textsc{Strong} \CycleDetection, we prove the problem to be \NP-complete through a reduction from \ThreeSAT. We also provide a complex search algorithm running in \FPT{} time w.r.t. the lifetime parameter, in which search paths are encoded as time values corresponding to the search, 
which together with a blocking technique when backtracking allows us to efficiently solve the problem.

Concerning \AcyclicTemporization, we can always trivially answer \texttt{yes} for \strongcycles by picking any ordering of the vertices, then 
assigning times to arcs going from smaller to bigger vertices with $1$, and arcs going from bigger to smaller vertices with $2$. This was first noted in~\cite{Bang-JensenBGP22} while dealing with DAG decomposition of static graphs. As for \simplecycle{s} and \weakcycle{s}, if we are allowed to use higher lifetime, we can also construct acyclic temporizations by using an ordering of the vertices. This can always be done for \simplecycle{s}, except when the girth\footnote{The girth of a graph is equal to the minimum length of a cycle.} of $D$ is~2, in which case the answer is trivially \texttt{no}. Similarly, the answer is always \texttt{yes} for \weakcycle{s} when the girth is at least~5, trivially always \texttt{no} when the girth is at most~3, and we leave open the case of girth~4. 
The latter temporization makes a bijection from $A(D)$ to $[m]$, where $m=\lvert A(D)\rvert$. 
If instead the lifetime is bounded,  we prove that \textsc{Simple} \AcyclicTemporization and \textsc{Weak} \AcyclicTemporization become $\NP$-hard for lifetime~2. We do this through reductions from \textsc{Not All Equal \ThreeSAT}. 
We note that these results apply to the non-strict model, as in the case of the strict one, if there are no digons, it is sufficient to give time 1 to all the arcs, that is, the answer is always \texttt{yes}. If there are digons, the answer for \weakcycles is trivially \texttt{no}, while for the other types it is still \texttt{yes} applying the same strategy.

\begin{table}[t]
\centering
		\begin{tabular}{|c || c | c | c |}
        \hline
        \multirow{3}{*}{\textsc{Cycle Def}} & \multicolumn{3}{c|}{\textsc{Problems}}\\
        \cline{2-4}
        & \multirow{2}{*}{\CycleDetection} & \multicolumn{2}{c|}{\AcyclicTemporization}\\
        \cline{3-4}
        &  & Lifetime 2 & Lifetime Unbounded\\
        \hline
        \hline
        \weakcycle & Poly (\Cref{prop:weak_cycle_detection_poly}) & \makecell{\NP-complete \\ (\Cref{thm:weakacy})} & \makecell{\texttt{no} if girth $\leq 3$ \\ \texttt{yes} if girth $\geq 5$ \\ (\Cref{thm:unboweakacy})} \\
        \hline
        \simplecycle & Poly (\Cref{prop:simple_cycle_detection_poly}) & \makecell{\NP-complete \\ (\Cref{thm:simpleacy})} & \makecell{\texttt{no} if girth $\leq 2$ \\ \texttt{yes} if girth $\geq 3$ \\ (\Cref{thm:unbosimpleacy})} \\
        \hline
        \strongcycle & \makecell{\NP-complete \\ (\Cref{theorem:strong_cycle_detection_NPcomplete}) \\ \FPT\ wrt lifetime \\ (\Cref{theorem:strong_cycle_detection_fpt})} & \multicolumn{2}{c|}{\makecell{always \texttt{yes} \\ (\Cref{prop:strongtemp})}}\\
        \hline
        \end{tabular}
\caption{Main results for our problems on \CycleDetection and \AcyclicTemporization, concerning \weakcycle{s}, \simplecycles, and \strongcycles.
			\label{fig:results}}
\end{table}

\textbf{Related Works.} We are not aware of a systematic study of cycles in temporal graphs. We can find in the literature studies about simple-cycles, for instance concerning Eulerian temporal cycles~\cite{BumpusM23,MarinoS23}, and Hamiltonian cycles, also referred to as temporal vertex exploration returning to the base~\cite{AkridaMSR21} (where the latter is a constrained version of the temporal vertex exploration problem where there is no need to go back to the starting vertex~\cite{ErlebachS23,Erlebach0K21}). On the other hand, as far as we know, surprisingly, we are the first ones to introduce the notion of \weakcycle and \strongcycle.

Detecting a cycle in static graphs can be easily done by applying a \textsc{Breadth-First Search} (BFS) or a \textsc{Depth-First Search} (DFS) from any vertex. Indeed, when the search explores an edge which leads to an already visited vertex, then a cycle has been detected, and when this does not occur, then no cycle exists. In digraphs, a similar idea works, although instead of an already visited vertex triggering detection, the vertex has to be in the current search path as well. 
In \cite{xuan2003computing}, (polynomial-time) search algorithms are presented for temporal graphs. Among these, one computes earliest arrival paths from the root vertex to the other vertices, or in other words, it computes earliest arrival times (earliest among all possible temporal paths) from the root to the other vertices. Informally, the search progresses by selecting earliest incident edges such that they obey the temporal order of the created temporal paths. In~\cite{wu2014path}, this result is presented again, but complemented by a similar algorithm for computing latest departure times between vertices. 

Concerning \AcyclicTemporization, we highlight that this falls into the so-called network realization problem framework, where we are given a static graph and we have to assign time to the arcs in order to meet some property. Some of the properties considered in the literature are: ensure reachability~\cite{klobas_et_al:LIPIcs.MFCS.2022.62}; and meet exact/upper bounds on the fastest path durations among its vertices on periodic temporal graphs~\cite{klobas_et_al:LIPIcs.SAND.2024.16,mertzios2024realizingtemporaltransportationtrees}. Another close relation to this notion of acyclic temporization is the one of \emph{Good edge-labeling}~\cite{BCP13}. A \emph{labeling} of the edges of a given simple undirected graph $G$ is an assignment of a real number to each edge of $G$. It is said to be \emph{good} if, for any pair of vertices $u,v\in V(G)$, there do not exist two non-decreasing $u,v$-paths, with respect to the edge labels. In particular, labels can be assumed to be distinct, i.e. strict and non-strict cases are equivalent in this context. Note that this notion is similar, but not equivalent, to the case of \textsc{Weak} \textsc{Acyclic Temporization}. 
In~\cite{BCP13}, the authors use the notion of good edge-labeling to prove that there exist particular optical networks 
and set of requests to be assigned wavelengths  
such that, if one wants to assign distinct wavelengths to requests sharing an arc, then the number of wavelengths can be arbitrarily large.

Finally, let us mention the problem of computing a temporal feedback edge set as discussed in~\cite{HaagMNR22}, which also aims to achieve acyclic temporal graphs. However, unlike our approach of assigning suitable times to ensure acyclicity, their method considers a given temporal graph and focuses on removing a subset of temporal edges (referred to as time-edges) or edges (referred to as connection sets) to eliminate all simple cycles.

\textbf{Structure of the paper.} In Section~\ref{sec:preliminaries}, we present our notation and definitions. In Section~\ref{sec:cycledetection}, we present our results about detecting cycles, and in Section~\ref{sec:temporization} the results about acyclic temporization. 
Results marked with a $\normalfont{(\star)}$ indicate that the proof (and/or corresponding lemmas etc.) are moved to the appendix, or only a proof sketch is provided. 

\section{Preliminaries}
\label{sec:preliminaries}

Given a digraph $D$, a \emph{walk} in $D$ is a sequence $W = (v_1,e_1,v_2,\ldots,v_q,e_q,v_{q+1})$ of alternating vertices and arcs of $D$ where $e_i = v_iv_{i+1}$ for each $i\in [q]$. It is a \emph{path} if $v_1,\ldots,v_{q+1}$ are all distinct and a \emph{cycle} if $v_1,\ldots,v_q$ are all distinct and $v_1 = v_{q+1}$. We denote by $V(W)$ the set $\{v_1,\ldots, v_{q+1}\}$ and by $A(W)$ the set $\{e_1,\ldots, e_q\}$. It is said that $W$ has \emph{length} $q$ and \emph{order} $q+1$. In this paper, we work on simple digraphs, so we can omit the arcs from the sequence, writing $W = (v_1,v_2,\ldots,v_q,v_{q+1})$ instead.
Given a temporal directed graph $\mathcal{D} = ( D,\lambda)$, the \emph{set of vertices} of $\mathcal{D}$ is equal to $V(D)$, the \emph{set of arcs} of $\mathcal{D}$ is equal to $A(D)$, the set of \emph{temporal vertices} of $\mathcal{D}$ is equal to $V(D)\times [\tau]$, and the set of \emph{temporal arcs} of $\mathcal{D}$ is equal to $\{(e,t)\mid e\in A(D) \mbox{ and }t\in \lambda(e)\}$. These are denoted, respectively, by $V(\mathcal{D})$, $A(\mathcal{D})$, $V^T(\mathcal{D})$, and $A^T(\mathcal{D})$. Given vertices $v_1,v_{q+1}\in V(D)$, a \emph{temporal $v_1,v_{q+1}$-walk} in $\mathcal{D}$ is defined as a sequence of vertices and times  $W = (v_1, t_1,v_2,\cdots,t_q,v_{q+1})$ such that, for each $i\in [q]$, there exists $e_i = v_iv_{i+1} \in A(D)$, $t_i\in \lambda(e_i)$, and $t_i\le t_{i+1}$. 
An equivalent definition exists concerning temporal edges. It is said to be \emph{strict} if $t_i<t_{i+1}$ for every $i\in [q]$, and non-strict if $t_i = t_{i+1}$ for some $i$. It is called a temporal  \emph{$v_1,v_{q+1}$-path} if all vertices are distinct. 
 We also say that $W$ \emph{starts or departs at time $t_1$} and \emph{finishes or arrives at time $t_q$}. 
The set $\{v_1,\ldots, v_{q+1}\}$ is denoted by $V(W)$ and the set $\{e_1,\ldots,e_q\}$, by $A(W)$. Additionally, the set $\{(e_i,t_i)\mid i\in[q]\}$ is denoted by $E^T(W)$.

We write EAT$(u, v)$ to be the \emph{earliest arrival time} from vertex $u$ to vertex $v$, defined as the earliest arrival time among all temporal paths from $u$ to $v$. Special cases include EAT$(u, u) = 0$, and EAT$(u, v) = + \infty$ if $u$ cannot reach $v$. Similarly, LDT$(u, v)$ is the \emph{latest departure time} from vertex $u$ to vertex $v$, defined as the latest departure time among all temporal paths from $u$ to $v$. Special cases include LDT$(u, u) = \tau$, and LDT$(u, v) = - \infty$ if $u$ cannot reach $v$. 
As mentioned in the introduction, earliest arrival times and latest departure times can be computed in polynomial time \cite{xuan2003computing,wu2014path},. 
We use these algorithms as a black box for \CycleDetection.

\section{Cycle detection}
\label{sec:cycledetection}

In this section we describe our results for the \CycleDetection problem. By computing earliest arrival times, we obtain the first two polynomial-time results for \simplecycles and \weakcycles. In the remainder, namely Section~\ref{sec:detect-strong}, we prove hardness for \textsc{Strong} \CycleDetection and give an \FPT algorithm wrt the lifetime $\tau$.




\begin{proposition}
\label{prop:weak_cycle_detection_poly}
$\normalfont{(\star)}$ \textsc{Weak} \CycleDetection is polynomial-time solvable.
\end{proposition}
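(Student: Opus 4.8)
The plan is to reduce \textsc{Weak} \CycleDetection to computing earliest arrival times, which we are allowed to use as a black box. Recall that a cycle $C$ of $D$ is a \weakcycle\ if there exist vertices $x, y \in V(C)$ together with a \tpath{$x,y$} $P$ and a \tpath{$y,x$} $P'$ such that $E(P) \cup E(P') = E(C)$. The key observation is that, since $C$ is a cycle, once we fix two distinct vertices $x, y \in V(C)$, the cycle is split into exactly two internally disjoint arc-sets (the two ``arcs'' of $C$ from $x$ to $y$), and any $x,y$-path and $y,x$-path whose edge sets union to $E(C)$ must be precisely these two halves, one traversed from $x$ to $y$ and the other from $y$ to $x$. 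In other words, detecting a \weakcycle\ amounts to finding two internally vertex-disjoint directed paths $Q_1$ (from $x$ to $y$) and $Q_2$ (from $y$ to $x$) in $D$ that together form a cycle, such that $Q_1$ admits a temporal realization as an $x,y$-path and $Q_2$ admits a temporal realization as a $y,x$-path.

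First I would note that we do not actually need to guess the cycle: it suffices to find any two internally vertex-disjoint vertices $x, y$ such that there is a temporal $x,y$-path and a temporal $y,x$-path that are internally vertex-disjoint from each other — because the union of two such internally disjoint paths always contains a cycle of $D$ (a weak one, by construction). Conversely a \weakcycle\ trivially gives such a pair. However, insisting on internal vertex-disjointness of the two temporal paths is what makes a naive approach fail, so instead I would argue that this disjointness requirement can be dropped: if there is \emph{any} temporal $x,y$-path $P$ and \emph{any} temporal $y,x$-path $P'$ (not necessarily disjoint), then the closed temporal walk $P$ followed by $P'$ contains, as a sub-object, a temporal $x',x'$-walk that is in fact a closed temporal path on some smaller cycle, hence a \simplecycle, and every \simplecycle\ is a \weakcycle. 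Thus $D$ contains a \weakcycle\ if and only if there exists an ordered pair of distinct vertices $x, y$ with a temporal $x,y$-path and a temporal $y,x$-path — equivalently, $\mathrm{EAT}(x,y) < +\infty$ and $\mathrm{EAT}(y,x) < +\infty$.

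The algorithm is then: for every vertex $x \in V(D)$, run the earliest-arrival-time search from $x$ to compute $\mathrm{EAT}(x, \cdot)$; answer \texttt{yes} iff there exist distinct $x, y$ with $\mathrm{EAT}(x,y) < +\infty$ and $\mathrm{EAT}(y,x) < +\infty$, i.e.\ iff some pair of distinct vertices lies in a common ``temporal strongly connected'' relation. Running the polynomial-time EAT algorithm of~\cite{xuan2003computing,wu2014path} from each of the $n$ vertices and then checking all $O(n^2)$ ordered pairs gives overall polynomial time.

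The main obstacle — and the only real content of the proof — is the equivalence claimed in the second paragraph: that the existence of a (possibly non-disjoint) temporal $x,y$-path and temporal $y,x$-path forces the existence of an actual \weakcycle\ in $D$. I would prove this by taking such $P, P'$, concatenating them into a closed temporal walk $W$ starting and ending at $x$, and then extracting from $W$ a minimal closed temporal sub-walk; minimality forces it to repeat no vertex except its endpoints, so it is a temporal $z,z$-path on a genuine cycle $C$ of $D$, giving a \simplecycle\ and hence a \weakcycle. Care is needed to check that cutting out a sub-walk of a temporal walk again yields a valid temporal walk (the time labels along any contiguous sub-walk are still non-decreasing), which is immediate, and that the extracted cycle is non-trivial, which holds since $x \ne y$ guarantees $W$ is not a single repeated vertex.
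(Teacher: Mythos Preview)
Your algorithm is exactly the paper's: answer \texttt{yes} iff some pair $x\neq y$ has finite $\mathrm{EAT}(x,y)$ and $\mathrm{EAT}(y,x)$. The gap is in the correctness argument. You claim that the concatenation of a temporal $x,y$-path $P$ and a temporal $y,x$-path $P'$ is a closed \emph{temporal} walk, and then extract a \simplecycle\ from it. Neither step is valid. The arrival time of $P$ at $y$ may exceed the departure time of $P'$ from $y$, so $P\cdot P'$ need not be temporal; and even when $P,P'$ are internally disjoint (so their union is an honest cycle), that cycle need not be a \simplecycle\ at all. Take the $4$-cycle $a\to b\to c\to d\to a$ with times $1,2,1,2$ (the paper's Figure~\ref{fig:cycles}b): here $P=(a,1,b,2,c)$ and $P'=(c,1,d,2,a)$ are disjoint temporal paths, yet no vertex of the cycle admits a temporal self-path, so there is no \simplecycle\ anywhere in the graph. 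Your extraction lemma therefore fails, and the stronger claim ``mutual reachability implies a \simplecycle'' is simply false.

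The paper avoids this by never attempting a temporal concatenation and never aiming for a \simplecycle. Given temporal paths $P_{xy}$ and $P_{yx}$, if they meet only in $\{x,y\}$ then their union is already a cycle carrying a temporal $x,y$-path and a temporal $y,x$-path, i.e.\ a \weakcycle. Otherwise, let $v$ be the first internal vertex of $P_{xy}$ that also lies on $P_{yx}$; take $Q$ the prefix of $P_{xy}$ up to $v$ and $Q'$ the suffix of $P_{yx}$ from $v$. By the choice of $v$, no internal vertex of $Q$ lies on $P_{yx}$, so $Q$ and $Q'$ are internally disjoint and their union is a cycle; it is a \weakcycle\ witnessed by the temporal $x,v$-path $Q$ and the temporal $v,x$-path $Q'$. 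No monotonicity across the junction is required, because the definition of \weakcycle\ only asks for two separate temporal paths, not one closed temporal walk. Replacing your third paragraph with this argument fixes the proof.
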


\begin{proposition}
\label{prop:simple_cycle_detection_poly}
$\normalfont{(\star)}$ \textsc{Simple} \CycleDetection is polynomial-time solvable.
\end{proposition}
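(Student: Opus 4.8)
The plan is to reduce the existence of a \simplecycle to a family of latest-departure-time queries, answered via the algorithms of~\cite{xuan2003computing,wu2014path} used as a black box. The guiding observation is that a \simplecycle always comes with a designated start vertex $x\in V(C)$ and a first arc $e=(x,y)\in E(C)$, traversed at some time $t\in\lambda(e)$, after which the remainder of the closed walk witnessing $C$ is precisely a temporal $y,x$-path departing at a time at least $t$. Conversely, for any arc $e=(x,y)\in A(D)$ and any temporal $y,x$-\emph{path} $P'$ departing at a time $s\ge\min\lambda(e)$, prepending $e$ (traversed at time $\min\lambda(e)$) to $P'$ yields a valid temporal $x,x$-walk, since $\min\lambda(e)\le s$; and because $P'$ has pairwise distinct vertices with $x$ as its endpoint, it cannot use the arc $e$ (which leaves $x$), so $E(P')\cup\{e\}$ is exactly the arc set of a directed cycle $C$ of $D$ through $x$, of length at least $2$ as $x\neq y$. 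Hence $\tG$ contains a \simplecycle if and only if there is an arc $e=(x,y)\in A(D)$ with $\mathrm{LDT}(y,x)\ge\min\lambda(e)$.

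First I would prove the forward direction: given a \simplecycle $C$ with witnessing temporal $x,x$-path $P=(x=v_1,t_1,v_2,\dots,t_q,v_{q+1}=x)$, the suffix $(v_2,t_2,\dots,t_q,v_{q+1}=x)$ is a temporal $v_2,x$-path departing at time $t_2\ge t_1\ge\min\lambda(v_1v_2)$, so the arc $v_1v_2$ satisfies $\mathrm{LDT}(v_2,x)\ge\min\lambda(v_1v_2)$. For the backward direction, pick a witnessing arc $e=(x,y)$, let $P'$ be a temporal $y,x$-path attaining $\mathrm{LDT}(y,x)$ (which is at least $\min\lambda(e)$ by the choice of $e$), and build $P$ as above; then $E(P)=E(C)$ by construction, so $C$ is a \simplecycle of $\tG$.

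One technical point to treat carefully: $\mathrm{LDT}(y,x)$ is defined over temporal \emph{paths}, whereas it is convenient (and matches what the black box returns) to reason with temporal \emph{walks}; the two notions coincide because any temporal $y,x$-walk departing at time $s$ can be turned into a temporal $y,x$-path departing at time at least $s$ by repeatedly deleting the segment between two occurrences of a repeated vertex, an operation that preserves temporal feasibility and never decreases the departure time. The same shortcutting is what allows us to take $P'$ to be a genuine path in the backward direction, which is essential so that $E(P')\cup\{e\}$ is the arc set of a directed cycle rather than that of a closed walk with repetitions.

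Given the equivalence, the algorithm is immediate and clearly polynomial: compute $\mathrm{LDT}(u,v)$ for all ordered pairs $(u,v)$ with the black box, then scan $A(D)$ and answer \texttt{yes} iff some arc $(x,y)$ satisfies $\mathrm{LDT}(y,x)\ge\min\lambda((x,y))$, and \texttt{no} otherwise. I do not expect any deep obstacle; the only delicate part is the bookkeeping around the definition of a temporal $x,x$-path, namely ensuring that the closed walk we build genuinely has the arc set of a directed cycle (no repeated internal vertex, no reuse of $e$), which is exactly where the walk-to-path reduction and the observation ``$e$ leaves $x$ while $x$ occurs only as the endpoint of $P'$'' are used.
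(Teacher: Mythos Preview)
Your proof is correct and mirrors the paper's approach: the paper checks, for each arc $(v,r)$, whether $\mathrm{EAT}(r,v)\le\max\lambda((v,r))$ (treating the arc as the \emph{last} arc of the temporal $r,r$-path), whereas you check, for each arc $(x,y)$, whether $\mathrm{LDT}(y,x)\ge\min\lambda((x,y))$ (treating the arc as the \emph{first} arc of the temporal $x,x$-path). These are symmetric instantiations of the same idea, and your write-up is in fact more careful than the paper's about why the resulting closed walk is genuinely a cycle.
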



\subsection{Detecting strong-cycles}
\label{sec:detect-strong}

The algorithms detecting \simplecycles and \weakcycles can efficiently use the black box for EAT because, intuitively, the temporal paths corresponding to these EAT concatenate nicely into a cycle structure when there is only one or two vertices that need to reach themselves or each other. This nice concatenation cannot be ensured when multiple such vertices and thus multiple temporal paths exist, which is the case for \strongcycles. 

This difficulty makes \textsc{Strong} \CycleDetection \NP-complete, as shown next by reducing from \textsc{3-SAT}. Interestingly, the lifetime of the digraph resulting from the reduction depends on the size of the formula. This is not by chance, as indeed, in the remainder, we prove that the problem is FPT with respect to the lifetime.

\subsubsection{Hardness of detecting \strongcycle}
In order to prove hardness of detecting \strongcycle, we present first the \emph{auxiliary cycle} structure that will be useful when assigning times to the arcs of the constructed digraph. 
Then, we propose our reduction.



\begin{definition}
    The \emph{auxiliary cycle} of order $n$ 
    is the temporal digraph whose vertex set is $\{v_0, v_1, v_2, \dots, v_{n-1}\}$ and arc set is $\{e_0 = 
    v_{n-1}v_0\} \cup \{e_{i} = 
    v_{i-1}v_i\mid 1 \leq i \leq n-1\}$, with $\lambda(e_0) = \{0, n, 2n, 3n, \dots, (n-1)n\}$ and $\lambda(e_i) = \{n-i, 2n-i, 3n-i, \dots, (n-1)n-i\}$ for each $1 \leq i \leq n-1$. See \Cref{fig:aux_cycle} for an example.
\end{definition}

\begin{figure}[h!]
    \centering
    \includegraphics[scale=0.9]{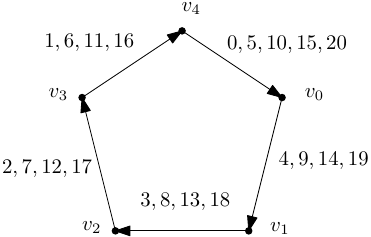}
    \caption{Auxiliary cycle of order 5.}
    \label{fig:aux_cycle}
\end{figure}

We now prove that every auxiliary cycle is a \strongcycle. To this end, we have:

\begin{proposition}
\label{prop:aux_cycle_unicity_and_disjointness}
    $\normalfont{(\star)}$ Given an auxiliary cycle $\mathcal{A}$ of order $n$, there exists exactly one \tpath{$v,v$} for each $v \in V(\mathcal{A})$. In particular, given $v_i$ and $v_k$ such that $i \neq k$ and 
    $n-1 \notin \{i,j\}$, these paths do not share temporal arcs. 
\end{proposition}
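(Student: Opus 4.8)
The plan is to fix a vertex $v_i \in V(\mathcal{A})$ and argue in three stages: first pin down the arc order of any \tpath{$v_i,v_i$} $P$ with $E(P) = E(C)$ (where $C$ is the underlying cycle of $\mathcal{A}$); then pin down the times along $P$, obtaining both existence and uniqueness; and finally extract an explicit formula for the temporal arcs of $P$ from which disjointness follows.

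For the first stage, since $C$ is a directed cycle each of its vertices has a unique out-arc in $C$; so after leaving $v_i$ through its unique out-arc $a_1 = e_{(i+1)\bmod n}$ there is no choice at any subsequent vertex, and $P$ is forced to traverse $a_1, a_2, \dots, a_n$ with $a_p = e_{(i+p)\bmod n}$. Only the times remain to be determined.

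The second stage is the heart of the argument, and I expect it to be the main obstacle. I would use a rigidity fact based on the residue--quotient decomposition $t = qn + r$ (with $0 \le r \le n-1$) of the times occurring in $\mathcal{A}$: one checks that $\lambda(e_0) = \{qn : 0 \le q \le n-1\}$ and that, for $1 \le \ell \le n-1$, $\lambda(e_\ell)$ is exactly the set of $t$ with $r = (-\ell) \bmod n$ and $0 \le q \le n-2$. Along $a_1, \dots, a_n$ the residue decreases by $1$ at each step, except for a single ``wrap'' step --- present precisely when $i \ne 0$, at step $p = n-i$ --- where it jumps from $0$ to $n-1$. The constraint $t_p \le t_{p+1}$ then forces $q(a_{p+1}) \ge q(a_p) + 1$ at every non-wrap step and $q(a_{p+1}) \ge q(a_p)$ at the wrap step. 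Since $q(a_1) \ge 0$, since $a_n = e_i$ caps $q(a_n)$ at $n-2$ (or at $n-1$ when $i = 0$), and since there are exactly $n-2$ (resp.\ $n-1$) non-wrap steps, all of these inequalities must be tight; this determines every quotient, hence every time. A direct verification then shows the forced assignment is strictly increasing and lies in the prescribed sets, so $P$ exists and is unique. The delicate points here are locating (and noting the absence of) the wrap step as a function of $i$ --- the boundary cases $i = 0$ and $i = n-1$ need separate care --- and not conflating ``the unique candidate'' with ``a valid temporal path'', the latter still requiring a check.

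For the third stage, tightness above gives $t_{p+1} - t_p = n-1$ for all $p$ and $t_1 = n-1-i$, so the $p$-th temporal arc of the path from $v_i$ would be $(\,e_{(i+p)\bmod n},\ (n-1-i) + (p-1)(n-1)\,)$. From this, disjointness is immediate: if the paths from $v_i$ and $v_k$ (with $n-1 \notin \{i,k\}$, i.e.\ $i, k \in \{0, \dots, n-2\}$, and $i \ne k$) shared a temporal arc, occurring at step $p$ of one and step $p'$ of the other, then equating the two times would give $k - i \equiv 0 \pmod{n-1}$; but $i, k \in \{0, \dots, n-2\}$ and $i \ne k$ force $0 < |k-i| \le n-2 < n-1$, a contradiction. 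Hence the two paths are temporal-arc-disjoint, as claimed.
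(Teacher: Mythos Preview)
Your proposal is correct. The arc-order observation, the explicit formula $t_p=(n-1-i)+(p-1)(n-1)$ you derive, and the disjointness argument via residues mod $n-1$ all match what the paper obtains (the paper writes the same times as $t_i^j=j(n-1)-i$), and your handling of the boundary cases $i=0$ and $i=n-1$ is accurate.

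The one genuine difference is in how uniqueness is argued. The paper first \emph{guesses} the times $t_i^j=j(n-1)-i$, checks they form a temporal path, then shows they are forced by a two-sided squeeze: each $t_i^j$ is the minimum time on its arc that is at least $t_i^{j-1}$ (the next smaller label being $t_i^j-n<t_i^{j-1}$), and taking any larger label at some step propagates an excess of $+n$ all the way to the last arc, overshooting its maximum label $(n-1)n-i$. You instead \emph{derive} the times: decomposing each label as $qn+r$, you observe that along the path the residue $r$ drops by $1$ at every step except the single wrap from $e_0$ to $e_1$, so the non-strict constraint forces $q$ to increase by at least $1$ at each non-wrap step; counting against the cap $q(a_n)\le n-2$ (or $n-1$ when $i=0$) makes every inequality tight. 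Your route avoids the initial guess and makes the ``strictly increasing by $n-1$'' structure fall out automatically, at the cost of setting up the $q,r$ bookkeeping; the paper's route is shorter once the formula is on the table. Either way the disjointness step is the same: distinct $i,k\in\{0,\dots,n-2\}$ give time sets lying in distinct residue classes mod $n-1$, so no temporal arc can be shared.
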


Let us refer to the times that each vertex $v_i \neq v_{n-1}$ requires to reach itself in the auxiliary cycle, as $L^\circlearrowright(v_i) = \{(n-1) -i, 2(n-1)-i, 3(n-1)-i, ..., n(n-1)-i\}$.

Note that vertex $v_{n-1}$ admits exactly one temporal path as well, and that it uses times $0 \cup L^\circlearrowright(v_0) \setminus \max(L^\circlearrowright(v_0))$. Together with \Cref{prop:aux_cycle_unicity_and_disjointness}, we thus have that any auxiliary size is a \strongcycle.


\begin{theorem}
    \label{theorem:strong_cycle_detection_NPcomplete}
	$\normalfont{(\star)}$ \textsc{Strong} \CycleDetection is \NP-complete.
\end{theorem}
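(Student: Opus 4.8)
\textbf{Proof proposal for Theorem~\ref{theorem:strong_cycle_detection_NPcomplete}.}

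The plan is to reduce from \ThreeSAT. Membership in \NP\ is immediate: a certificate is, for each vertex $x$ of the digraph, a temporal $x,x$-path whose arc set equals the arc set of a common cycle $C$; one verifies in polynomial time that all these paths use the same underlying cycle $C$ and that each is a valid temporal path. For hardness, given a \ThreeSAT\ formula $\phi$ with variables $x_1,\dots,x_n$ and clauses $c_1,\dots,c_m$, I would build a temporal digraph $\tG_\phi$ that contains a \strongcycle\ if and only if $\phi$ is satisfiable. The key design idea, foreshadowed by the text, is to use the auxiliary cycle as a ``backbone'': by \Cref{prop:aux_cycle_unicity_and_disjointness} an auxiliary cycle of order $n$ has, for each vertex, a \emph{unique} temporal path returning to itself, and (for all but one special vertex) these paths are pairwise temporal-arc-disjoint. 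This rigidity is what forces global consistency — the ``for every vertex'' quantifier in the definition of \strongcycle\ is exactly what encodes that every clause must be satisfied simultaneously by one global truth assignment.

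Concretely I would take a long directed cycle $C$ through gadget vertices, one block of vertices per variable and one per clause, and arrange the underlying cycle so that the only candidate for a \strongcycle\ is $C$ itself (e.g.\ by making every vertex have in-degree and out-degree along $C$ forced, or by ensuring no other cycle in the underlying digraph can carry temporal paths from \emph{all} its vertices). Within each variable block I would place a small ``choice'' sub-structure with two internally disjoint directed sub-paths, a \emph{true} branch and a \emph{false} branch, assigning the two branches incompatible time patterns (inspired by the two ``phases'' of the auxiliary cycle's labels, so that a consistent temporal traversal of the whole of $C$ can use the time slots of at most one branch per variable). Clause blocks would be wired so that the sub-path through clause $c_j$ admits a valid temporal continuation precisely when at least one of its three literal-branches has been ``activated'' by the corresponding variable choice. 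The auxiliary-cycle labeling (with its arithmetic-progression time sets of the form $\{kn-i\}$) is the tool that lets me realize, on one long cycle, the property that a temporal $x,x$-path exists for \emph{every} $x$ simultaneously only when the local choices are globally coherent; the lifetime of $\tG_\phi$ will be $\Theta(|V(C)|^2)$, which (as the paper notes) is why this reduction cannot rule out an \FPT\ algorithm in the lifetime.

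The correctness argument splits as usual. For the forward direction, from a satisfying assignment I pick the corresponding branch in each variable block and, for each clause, a branch through a literal that satisfies it; I then exhibit, for every vertex $v$ of $C$, the temporal $v,v$-path along $C$ — here I would invoke the auxiliary-cycle machinery (\Cref{prop:aux_cycle_unicity_and_disjointness} and the description of $L^\circlearrowright$) to check that the chosen time slots really do line up into a non-decreasing closed walk from each starting point. For the backward direction, I argue that any \strongcycle\ must be $C$ (by the structural restrictions above), and then that the existence of a temporal $v,v$-path for \emph{every} $v\in V(C)$ forces, in each variable block, exactly one of the two branches to be ``time-consistent'' — yielding a well-defined assignment — and forces each clause block to route through a satisfied literal, so $\phi$ is satisfiable. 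The main obstacle I anticipate is the gadget design for clauses together with the labeling: I must choose the time sets so that (a) no ``cheating'' temporal path exists that would certify a \strongcycle\ without a genuine satisfying assignment, and (b) the ``for every starting vertex'' condition is neither vacuously easy nor impossible — this is precisely where the auxiliary cycle's uniqueness-of-return-path property is doing the heavy lifting, and getting the arithmetic of the progressions to mesh across variable blocks, clause blocks, and the wrap-around arc $e_0$ will be the delicate part of the proof.
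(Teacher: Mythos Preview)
Your outline has a structural inconsistency that would have to be resolved before anything could be checked. You first declare a single backbone cycle $C$ and say you will ``arrange the underlying cycle so that the only candidate for a \strongcycle\ is $C$ itself,'' and then immediately introduce, in each variable block, two internally disjoint directed sub-paths and, in each clause block, three literal-branches. With those choices the underlying digraph has exponentially many cycles, one per combination of branches, and the backward direction cannot begin with ``any \strongcycle\ must be $C$.'' More importantly, you never specify the mechanism that makes the \emph{for-every-vertex} condition interact with the branch choices: you state that this is ``exactly what encodes'' global consistency and then defer the actual arithmetic to ``the delicate part,'' but that arithmetic \emph{is} the reduction. Without it there is nothing to verify.

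For comparison, the paper's construction is organized differently and the mechanism is explicit. There are no separate variable gadgets at all. One takes three copies of an auxiliary cycle of order $4m+1$, merges the vertices of index $4i$ across the three copies, so that between consecutive merged vertices $v_{4i}$ and $v_{4(i+1)}$ there are three parallel $4$-arc paths; a cycle in the underlying digraph is obtained by choosing one of the three paths in each of the $m$ sections, and that choice encodes \emph{which literal satisfies clause $c_i$}. Consistency across clauses is then forced by a time-deletion trick: whenever literal $\ell_{i,j}$ and literal $\ell_{g,h}$ are negations of each other, the times in $L^{\circlearrowright}(v_{4i+j})$ are removed from the first arc of the $h$-th path in section $g$. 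Thus if a candidate cycle uses the $j$-th path in section $i$ and the $h$-th path in section $g$, the vertex $v^j_{4i+j}$ cannot complete its unique self-return walk (Proposition~\ref{prop:aux_cycle_unicity_and_disjointness}), and the cycle fails to be strong. The added time $0$ on selected arcs handles the non-``diagonal'' vertices within each section. This is the concrete content your sketch is missing; your variable-gadget/clause-gadget layout could perhaps be made to work, but as written it neither matches the paper nor supplies its own working mechanism.
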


\begin{proof}\emph{(Sketch)}
    \textsc{Strong} \CycleDetection is in \NP, because a solution subgraph $\mathcal{C}$ can be verified to be a cycle in the underlying graph, and deciding whether each vertex reaches itself can be done by checking whether EAT$(v, u)$ in $\mathcal{C}$ is at most $\max(\lambda(u,v))$, for each arc $
    uv \in A(\mathcal{C})$, similarly to \Cref{prop:weak_cycle_detection_poly}.
	
	To prove this problem is \NP-hard, we reduce \ThreeSAT to it. Let the generic instance of \ThreeSAT be the CNF formula $\phi$ of $n$ variables $x_0, x_1, ..., x_{n-1}$ and $m$ clauses $C_0, C_1, ..., C_{m-1}$. Let the literals of clause $C_i$ be denoted as $\ell_{i,1}, \ell_{i,2}$, and $\ell_{i,3}$. 
	Let us build an instance of \CycleDetection as the temporal digraph $\mathcal{D}(\phi)$ as follows. Initially, add three auxiliary cycles $\mathcal{A}^1$, $\mathcal{A}^2$, and $\mathcal{A}^3$, all of order $4m+1$. Let the corresponding vertices of $\mathcal{A}^1$, $\mathcal{A}^2$ and $\mathcal{A}^3$ be referred to as $v^1_i$, $v^2_i$, and $v^3_i$ respectively, for every $i\in\{0,\ldots, 4m\}$. Note that for any three vertices $v^1_i$, $v^2_i$, and $v^3_i$,  $L^\circlearrowright(v^1_i) = L^\circlearrowright(v^2_i) = L^\circlearrowright(v^3_i)$ Let us simply refer to these times as $L^\circlearrowright(v_i)$ instead.
	Now, for each $i \in \mathbb{N}$, merge\footnote{We define merging of vertices in temporal digraphs as in static digraphs, and times on arcs of pre-merged vertices remain on corresponding arcs of post-merged vertices.} the three vertices $v^1_{4i}$, $v^2_{4i}$, and $v^3_{4i}$, and refer to this merged vertex as $v_{4i}$ (see \Cref{fig:reduction_strongcycledetection}). Note that this also merges vertices $v^1_{4m}$, $v^2_{4m}$, and $v^3_{4m}$ into vertex $v_{4m}$.
	For each clause $C_i$, add time 0 to arcs 
    $v^1_{4i+2}v^1_{4i+3}$, $v^1_{4i+3}v_{4(i+1)}$, $v^2_{4i+1}v^2_{4i+2}$, $v^2_{4i+3}v_{4(i+1)}$, $v^3_{4i+1}v^3_{4i+2}$, and $v^3_{4i+2}v^3_{4i+3}$.
	Finally, for each literal $\ell_{i, j}$ corresponding to some variable $x_k$, and each literal $\ell_{g, h}$ which corresponds to $\neg x_k$, remove from arc 
    $v_{4g}v^h_{4g+1}$ all times from $L^{\circlearrowright}(v_{4i+j})$. This concludes the transformation. 
	
	\begin{figure}
		\begin{subfigure}[b]{0.49\textwidth}
        \centering
			\includegraphics[scale=.8]{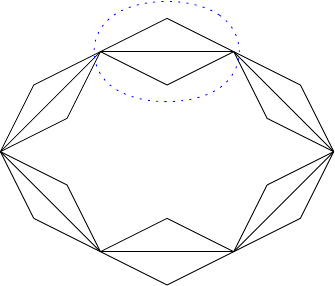}
			\caption{Transformation after merging vertices.
				\label{fig:reduction_strongcycledetection_1}}
		\end{subfigure}%
		\hfill
		\begin{subfigure}[b]{0.49\textwidth}
			\includegraphics{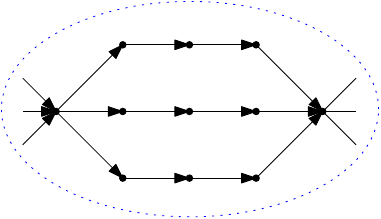}
			\caption{Paths between vertices $v_{4i}$ and $v_{4(i+1)}$.
				\label{fig:reduction_strongcycledetection_2}}
		\end{subfigure}%
		\caption{Example of the transformation for a \ThreeSAT formula of $m=6$ clauses. Each clause is represented as three paths between merged vertices in the created \CycleDetection instance (see the blue dotted selection in \Cref{fig:reduction_strongcycledetection_1}, and a more detailed view of these paths in \Cref{fig:reduction_strongcycledetection_2}).  
		For clarity, times on the arcs have been omitted.
		\label{fig:reduction_strongcycledetection}}
	\end{figure}
	
	In the appendix we prove that a positive instance for \ThreeSAT remains a positive instance for \CycleDetection after the transformation (\ThreeSAT $\implies$ \CycleDetection), and vice versa, that a positive instance of \CycleDetection in the transformed instance implies a positive instance of \ThreeSAT before the transformation (\CycleDetection $\implies$ \ThreeSAT).  The key idea for both directions is that a literal $\ell_{i, j}$ is true, if and only if path $(v^j_{4i+1}, v^j_{4i+2}, v^j_{4i+3})$ is part of a \strongcycle.
 \end{proof}

\subsubsection{Fixed-parameter tractability wrt lifetime}

To detect strong-cycles, a modified depth-first search is employed on every outgoing arc 
$a_r = rv$ of every possible root vertex $r$ (see \Cref{algo:full_cycle_detection} in the appendix). This search aims to iteratively construct a strong-cycle by exploring arcs from $r$, creating and extending a \textit{search path} $P$, until reaching $r$ again (see \Cref{algo:search_full_cycle} in the appendix). Along the search, time values corresponding to temporal paths along the search path are updated (see \Cref{algo:extend} in the appendix). Throughout the search, unsuccessful search paths backtrack and employ a ``blocking'' mechanism of said time values on the backtracked arcs, which effectively restrains running time to be superpolynomial in the lifetime parameter $\tau$ only. 

Let us first present the main data structures used to keep track of temporal paths: let \textit{root timetable} $T_r$ and \textit{path timetable} $T_P$ be defined as two arrays of size $\tau+1$, containing time values $\in \{0\} \cup [\tau]$, all initialized to $0$. Another important data structure used is the \textit{blocking matrix} $B$, which we will present later.
The values of the timetables change over the course of the search, and revert back to previous values when the search backtracks. (In the pseudo code, this is done through recursion and copying of the timetables.)


The timetables will represent the following throughout the search: 

\begin{itemize}
	\item Root timetable $T_r$: any non-zero value $T_r[x] = y$ represents ``The earliest arrival time from root vertex $r$, starting with a time $t \geq x$, following along search path $P$, and ending at the last vertex of the search path $v$, is $y$.''\footnote{This is denoted as $EAT_P^{\geq x}(r,v)$ in the pseudo code comments.} A zero value indicates no temporal path from $r$ to $v$, along $P$, and starting with a time at least $x$, exists;
	\item Path timetable $T_P$: any non-zero value $T_P[x] = y$ represents ``The earliest arrival time from some vertex $u \in P$, following along search path $P$, and ending at the last vertex of the search path $v$, is $y$; and the latest departure time from root vertex $r$, following along $P$ and ending at $u$, is $x+1$.''\footnote{These are denoted as $EAT_P(u,v)$ and $LDT_P(r,u)$ in the pseudo code comments.} In other words, the $x$ value represents a ``deadline'' by which $u$ needs to reach $r$, and thus if at some point of the search $T_P[x] > x$, that means the corresponding search path $P$ cannot result in a strong-cycle, due to some vertex $u \in P$ not being able to reach itself via $P$ and $r$. A zero value $T_P[x] = 0$ indicates no vertex in $P$ admits a latest departure time from $r$, along $P$, of $x+1$; and $T_P[\tau]$ is for the special case of $u = r$, as the latest departure time from $r$ to $r$ can be considered later than $\tau$, or $+\infty$.
\end{itemize}

Each search starts by setting for all $t$, $T_r[t] = \min \ell \in \lambda(a_r) : \ell \geq t$, i.e. the smallest time on arc $a_r$ such that it is greater or equal to $t$, or keep $T_r[t] = 0$ if no such time exists. 

Then, whenever an arc 
$a =uv$ is explored to extend the search, we update $T_P$ by, for each non-zero value $T_P[x] = y$, replacing it by the smallest time $y' > y$ among $\lambda(a)$, and we set $T_P[\max t : T_r[t+1] \neq 0] = \min(\lambda(a))$. (If this case already has a non-zero value, we keep the largest of the two values.) This represents the reachability from/to vertex $u$ in search path $P$. Root timetable $T_r$ is also updated by, for each non-zero value $T_r[x] = y$, replacing it by the smallest time $y' > y$ among $\lambda(a)$ (or 0 if no such time exists).

In the best case scenario, arcs are explored until the search finds root vertex $r$ again (or another vertex in the search path $P$), and $T_P$ can be updated correctly, i.e. for all $t$, $T_P[t] \leq t$. In this case, it signifies that vertices can reach $r$ in time to then reach back to themselves by departing from $r$. Thus, all vertices can reach themselves through the underlying structure of the successful search path. This underlying structure is ensured by design to be a cycle since a search path is extended until it loops back to root vertex $r$ (or some other vertex of the search path).

However, when the search encounters an issue, we use the \textit{blocking matrix} $B$, which is an $m \times 4^{(\tau+1)^2}$ matrix, containing boolean values, initially all set to \texttt{false}.
In the blocking matrix, the only modifications allowed are changing \texttt{false} values to \texttt{true}, which happens when the search unsuccessfully tries to extend, and when the search backtracks. Regarding the size of this matrix, we assume some order exists on $A$, the arcs of the temporal graph, which corresponds to the first dimension of size $m = |A|$; and we assume some order exists on $\mathcal{T}$, the collection of all possible states of timetables $(T_r, T_P)$, corresponding to the other dimension of size $4^{\tau^2} = |\mathcal{T}|$. 
To access and modify $B$, we thus use functions $\texttt{order} : A \rightarrow [m]$ and $\texttt{order} : \mathcal{T} \rightarrow [4^{\tau^2}]$. 

The value $B[\texttt{order}(a)][\texttt{order}(T'_r, T'_P)]$ being \texttt{true} indicates that searches with corresponding timetables $T_r$ and $T_P$ s.t. $T_r = T'_r$ and $T_P = T'_P$, are blocked on $a$, i.e. such searches cannot extend through $a$, due to some previous search with these timetables already having explored through $a$ unsuccessfully. A \texttt{false} value instead allows the search to extend through $a$.

When some non-zero value $T_P[x] = y$ in the path timetable fails to update correctly while extending through some arc $a = uv$, i.e. trying to replace it by the smallest time $y' > y$ among $\lambda(a)$ results in a value $T_P[x] > x$ (or no such time $y'$ exists) then the search stops the exploration through this arc and continues through another arc $uv'$. It then sets $B[\texttt{order}(a)][\texttt{order}(T_r, T_P)] = \texttt{true}$, where $T_r$ and $T_P$ are the timetables before the failed update. When the search cannot extend through any other arc $uv'$, the search backtracks to the previous vertex $w$ and arc 
$wu$ now blocks the last timetables $T_r$ and $T_P$ that extended through the arc (i.e. the timetables as they were before exploring $wu$). Thus, the blocking matrix updates $B[\texttt{order}(wu)][\texttt{order}(T_r, T_P)] = \texttt{true}$. 

This blocking mechanism allows for only a limited amount of explorations per arc, since every exploration of an arc $a$ which doesn't end in a strong-cycle, will backtrack and change a \texttt{false} in the blocking matrix row $B[a]$ to \texttt{true}. Since $B[a]$ contains $4^{(\tau+1)^2}$ values, this can only occur a limited amount of times before all future searches are blocked to go through $a$, and the result follows.

We present in the appendix a corresponding implementation in pseudo code, a figure of key operations of the search algorithm on (part of) an example temporal graph, and the formal proofs of correctness and complexity, of which an important part is the following technical lemma which proves that the vertices and arcs of a path are irrelevant, and that only the timetables matter.


\begin{restatable}[]{lemma}{searchpathindependence}
\label{lemma:search_path_independence}
$(\star)$ If a search path $P$ and another search path $Q \neq P$ both start at the same root $r$, arrive at the same vertex $u$, and have the same timetables $(T_r, T_P)$, then on any arc 
	$a = uv$ for some $v \in V \setminus (P \triangle Q)$, both search paths extend in the exact same manner, i.e. the timetables remain identical after updating. 
\end{restatable}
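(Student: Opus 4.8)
The plan is to reduce the claim to a determinism statement about the update routine. First I would isolate \Cref{algo:extend}, which is the only place the timetables are rewritten when the search steps along an arc, and check, instruction by instruction, that it is a \emph{pure function} of its inputs $(T_r,T_P)$ and $\lambda(a)$: every instruction either replaces a non-zero entry $T_P[x]=y$ by the least element of $\lambda(a)$ exceeding $y$ (erasing it when there is none), writes $\min\lambda(a)$ into the slot $T_P[\max\{t:T_r[t+1]\neq 0\}]$ (keeping the larger value if that slot is already non-zero), or applies the analogous rule to $T_r$. None of these reads the vertex sequence of the search path, its arc sequence, or the set of already-visited vertices — only the two arrays and $\lambda(a)$. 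Hence two search states with identical $(T_r,T_P)$ that step along the same arc $a$ end with identical timetables; note that the clause after the ``i.e.'' in the statement makes clear that it is precisely this timetable update that is at stake, so the global blocking matrix (which the two searches may encounter in different states) is irrelevant here.

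Next I would settle the one structural point: why the hypothesis is $v\in V\setminus(P\triangle Q)$ rather than just $v\notin V(P)\cup V(Q)$. Since $v\notin P\triangle Q$, either $v\notin V(P)$ and $v\notin V(Q)$, or $v\in V(P)$ and $v\in V(Q)$ (the latter case includes $v=r$, as $r$ lies on both search paths). In \Cref{algo:search_full_cycle} the decision whether $a=uv$ is a genuine extension or a step back onto the current search path is taken solely from whether $v$ already belongs to the search path; therefore in both situations $P$ and $Q$ take the same branch. If it is an extension, \Cref{algo:extend} is invoked and the outputs agree by the first paragraph. If it closes a loop, the same bookkeeping is performed for both — in particular, for $v=r$, running \Cref{algo:extend} and then testing whether $T_P[t]\le t$ for all $t$ — so the timetables again coincide. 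Were $v$ instead in $V(P)\setminus V(Q)$, the search along $P$ would detect a loop and stop exploring $a$ while the one along $Q$ would extend, which is exactly the behaviour the hypothesis forbids.

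I expect the only delicate part to be the line-by-line verification in the first paragraph: one must make sure that \emph{every} step of \Cref{algo:extend} — including the index $\max\{t:T_r[t+1]\neq 0\}$, the ``keep the larger value'' tie-break, and the order in which $T_r$ and $T_P$ are refreshed — truly depends on nothing but $(T_r,T_P)$ and $\lambda(a)$. There is no genuine mathematical obstacle; once that determinism is established, the case analysis on the membership of $v$ is immediate, and the lemma follows.
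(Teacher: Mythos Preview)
Your proposal is correct and follows the same core approach as the paper: both argue that \texttt{extend} is a pure function of $(T_r,T_P)$ and $\lambda(a)$, so identical inputs yield identical outputs regardless of which search path produced them. Your second paragraph, analysing why the hypothesis reads $v\in V\setminus(P\triangle Q)$ and what happens at the branching in \texttt{searchStrongCycle}, goes beyond the paper's very brief proof (which addresses only the determinism of \texttt{extend} and does not discuss the symmetric-difference condition at all), but it is sound additional context rather than a different strategy.
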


Altogether, this results in the following.

\begin{restatable}[]{theorem}{strongcycledetectionfpt}
	\label{theorem:strong_cycle_detection_fpt}
$(\star)$ \textsc{Strong} \CycleDetection is fixed-parameter tractable with the parameter being the lifetime.
\end{restatable}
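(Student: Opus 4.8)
The plan is to analyze the search procedure sketched above, establishing its correctness and then bounding its running time by a function of $\tau$ times a polynomial in $n$ and $m$. Concretely, for every vertex $r$ and every out-arc $a_r = rv$ of $r$, we run the modified DFS that grows a search path $P$ from $r$ while maintaining the timetables $T_r$ and $T_P$ with the stated semantics: $T_r[x]$ records $EAT_P^{\geq x}(r,v)$, where $v$ is the current last vertex of $P$, and each nonzero $T_P[x] = y$ certifies that some vertex $u$ on $P$ has $LDT_P(r,u) = x+1$ and $EAT_P(u,v) = y$. The first task is to prove, by induction on the number of arcs appended to $P$, that the update rules — replacing each nonzero $T_P[x]=y$ by the least time on the new arc exceeding $y$, inserting $\min(\lambda(a))$ at coordinate $\max\{t : T_r[t+1]\neq 0\}$, and updating $T_r$ analogously — preserve these invariants. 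Once this is in place, a search that loops back to $r$ (or to a vertex of $P$) with $T_P[t] \le t$ for all $t$ exactly witnesses that every vertex of the underlying cycle can reach itself along it, i.e.\ yields a strong-cycle; and conversely any strong-cycle is detected by the run rooted at one of its vertices along the appropriate out-arc, provided blocking does not interfere — which is settled by the soundness argument below.

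The heart of the proof is to show that the blocking mechanism is \emph{sound}: blocking a pair $(T_r, T_P)$ on an arc $a$ never discards a search that could be completed to a strong-cycle. This is where \Cref{lemma:search_path_independence} is essential. If a search reaches a vertex $u$ via a path $P$ with timetables $(T_r, T_P)$ and fails to extend — either because some $T_P[x]=y$ cannot be updated through $a=uv$ without producing $T_P[x] > x$, or because, after exhausting all out-arcs of $u$, it backtracks over an arc $wu$ — then the failure depends only on the triple $(a, T_r, T_P)$, not on the identities of the vertices and arcs of $P$. Hence a later search arriving at $u$ (or, for a backtracked arc $wu$, arriving at $u$ via $w$) with the same timetables would replay precisely the same sequence of updates on any arc leading into $V \setminus (P \triangle Q)$ and fail in the same way; the lemma guarantees the two searches behave identically on exactly these arcs, so pruning the second one loses nothing. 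Combining this with the invariant maintenance above shows that a strong-cycle exists if and only if some run terminates successfully, which proves correctness.

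For the running time, observe that the only irreversible events are flips of entries of the blocking matrix $B$ from \texttt{false} to \texttt{true}, and every time a search fails to extend through an arc $a$ — in either the failed-update case or a backtrack over $a$ — it flips exactly one entry of the row $B[a]$, indexed by the current timetable state. Since $B[a]$ has $4^{(\tau+1)^2}$ entries and, once all of them are \texttt{true}, no search can ever traverse $a$ again, the total number of arc-exploration failures over the whole execution — summed over the $\Theta(m)$ root/out-arc pairs we start from — is $O\!\left(m \cdot 4^{(\tau+1)^2}\right)$ up to polynomial factors for the bookkeeping. Each successful extension strictly lengthens the current search path, so between two consecutive failures only $O(n)$ extensions occur, and every individual timetable update, blocking-matrix access via \texttt{order}, and comparison costs at most $\text{poly}(\tau, n, m)$. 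Multiplying through yields a bound of the form $f(\tau)\cdot\text{poly}(n,m)$ with $f(\tau) = 2^{O(\tau^2)}$, which is precisely fixed-parameter tractability with parameter the lifetime.

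The step I expect to be the main obstacle is the soundness of blocking, and in particular pinning down the right statement and scope of search path independence: one must verify that a completable search never needs to re-enter, through a blocked arc, a region that a previous (pruned) search has already explored, and that the ``deadline'' semantics of $T_P$ interacts correctly with backtracking so that a blocked state genuinely encodes a dead end rather than a merely suboptimal partial solution. Showing that $(T_r, T_P)$ is a \emph{complete} invariant — that two search states with identical timetables are truly interchangeable for the purpose of all future extensions, even across the symmetric-difference region $P \triangle Q$ — is the delicate part; the invariant-maintenance induction and the counting argument for the time bound are comparatively routine.
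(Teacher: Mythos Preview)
Your proposal is correct and follows essentially the same approach as the paper: maintain the two timetables with the stated semantics, use the search-path-independence lemma to justify that blocking is sound, and bound the running time by counting irreversible flips in the blocking matrix. The paper's complexity argument is phrased slightly differently (it directly bounds the number of times any single arc can be explored by $4^{(\tau+1)^2}$, rather than bounding extensions between consecutive failures), but this is a cosmetic difference leading to the same $f(\tau)\cdot\mathrm{poly}(n,m)$ bound.
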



\section{Acyclic Temporization}
\label{sec:temporization}

Given a directed graph $D$, a \emph{temporization of $D$} is an assignment of a non-empty time function $\lambda$ to each arc of $D$. In this section, for each type of temporal cycle, we are interested in finding temporizations that do not contain such cycles. 
In the following, we first give an easy solution for \textsc{Strong} \AcyclicTemporization, we then focus in Section~\ref{sub:simple} and in Section~\ref{sec:weak-cycles} respectively on \simplecycles and \weakcycles. Both these sections first analyses the unbounded lifetime case, i.e. we can use as many time values as we want to solve \AcyclicTemporization, and then, we focus on the bounded lifetime case, which turns out to be \NP-complete for both for $\tau=2$ using similar reductions.

In this section, we often use the following temporization, referred to as \textit{\lexicographicTemporization}.

\begin{definition}[\lexicographicTemporization]
    Assign an arbitrary order to $V(D)$. For all arcs $(u, v)$ such that $u>v$ (suppose there are $m'$) assign in an incremental manner one time per arc in lexicographic order, thus assigning times $1$ to $m'$ to these arcs. For the other arcs, being arcs 
    $uv$ such that $u<v$, assign in an incremental manner one time per arc in reverse lexicographic order, starting from time $m'$ (and thus ending with time $m$).
\end{definition}

 \begin{figure}[h]
\begin{center}
  

  
      

\includegraphics[scale=.9]{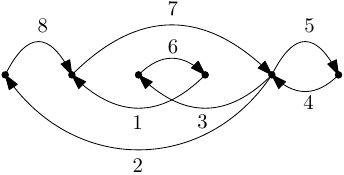}
\caption{Lexicographic temporization with $m=8$ and $m' = 4$.
}\label{fig:lexicographic_temporization}
\end{center}
\end{figure}

The following property for lexicographic temporizations holds.

\begin{lemma}    \label{lemma:lexicographicTemporizationJourneysAtMostLengthTwo}
    Any digraph with the \lexicographicTemporization results in a temporal graph which has temporal paths of length at most two. 
\end{lemma}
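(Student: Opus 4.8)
The plan is to show that any temporal path respecting the \lexicographicTemporization has at most two arcs by arguing that once a path uses an arc of the ``second block'' (an arc $uv$ with $u<v$, which received a time in $\{m', \dots, m\}$), it cannot be extended with a further arc, and analogously that it can be preceded by at most one arc. Recall the structure: arcs $(u,v)$ with $u>v$ (the ``backward'' arcs) get the times $1,\dots,m'$ in lexicographic order of their endpoints, and arcs $uv$ with $u<v$ (the ``forward'' arcs) get the times $m',\dots,m$ in reverse lexicographic order. Note times can coincide (the value $m'$ is shared), so we are in the non-strict model, and a temporal path is a sequence of arcs whose assigned times are non-decreasing.

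First I would record two monotonicity facts that follow directly from the definition. Among the backward arcs, the time assigned is strictly increasing with respect to the lexicographic order on $(u,v)$; in particular, two backward arcs $(u_1,v_1)$ and $(u_2,v_2)$ that are consecutive on a path (so $v_1 = u_2$) satisfy: the time of the first is smaller than the time of the second iff $(u_1, v_1) <_{\text{lex}} (u_2, v_2)$, and since $v_1 = u_2 > v_2$ while $u_1 > v_1 = u_2$, we get $u_1 > u_2$, so $(u_1,v_1) >_{\text{lex}} (u_2,v_2)$, meaning the time is \emph{decreasing} along a path of backward arcs. Hence a temporal path can contain at most one backward arc. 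Symmetrically, among forward arcs the time is strictly decreasing in lexicographic order, and for two consecutive forward arcs $u_1 v_1$, $v_1 v_2$ on a path we have $u_1 < v_1 < v_2$, so $(u_1,v_1) <_{\text{lex}} (v_1, v_2)$, which under the reverse-lexicographic assignment means the time of the first is \emph{larger} than the time of the second — again violating the non-decreasing condition. Hence a temporal path can contain at most one forward arc.

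Combining these: a temporal path has at most one backward arc and at most one forward arc, so it has length at most two. The only remaining point is to rule out the ``wrong order'' combination — a path consisting of a forward arc followed by a backward arc — since that would still be a valid length-two path and does not contradict the claim (the lemma only bounds the length, not the shape), so in fact nothing more is needed; but I would double-check the degenerate cases where times tie at the value $m'$ to be sure a length-two path is possible and a length-three path is not (e.g. forward–forward with equal times is still excluded because two forward arcs sharing a vertex are lexicographically comparable and hence get distinct times). The main obstacle is simply being careful with the lexicographic/reverse-lexicographic bookkeeping and with the non-strict (ties allowed) model at the shared value $m'$; there is no real structural difficulty beyond that.
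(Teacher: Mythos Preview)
Your two monotonicity observations are correct and are essentially what the paper uses, but the way you combine them has a gap. From ``no two \emph{consecutive} backward arcs'' you jump to ``a temporal path can contain at most one backward arc'' (and likewise for forward arcs). That does not follow: those two facts alone still permit alternating patterns such as B--F--B or F--B--F, where the two arcs of the same type are not adjacent. The missing third ingredient is that a forward arc can never be immediately followed by a backward arc, since forward arcs receive times above $m'$ and backward arcs receive times at most $m'$. You actually touch this case but draw the wrong conclusion, writing that a forward--backward pair ``would still be a valid length-two path \ldots\ so in fact nothing more is needed''. On the contrary, ruling out F--B is exactly what is needed: once BB, FF and FB are all forbidden as consecutive pairs, the only admissible two-step is B--F, and any further extension would create FF or FB; hence the length is at most two. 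This is precisely the paper's argument.

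On the shared value $m'$: the paper's definition is mildly inconsistent, but the clause ``ending with time $m$'' forces the forward arcs to occupy $m'+1,\ldots,m$, so there is in fact no tie. If one took the overlap at $m'$ literally, the lemma would actually be false: on vertices $\{1,2,3,4\}$ with arcs $(3,1),(1,4),(4,2)$ one has $m'=2$, $\lambda(3,1)=1$, $\lambda(4,2)=2$, $\lambda(1,4)=2$, and $3\to1\to4\to2$ is a non-decreasing temporal path of length~$3$. So your instinct to scrutinise the tie was right, but checking only forward--forward ties does not suffice; the B--F--B pattern through the shared value is the dangerous one.
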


\begin{proof}
    Consider two arcs 
    $uv$ and $vw$
    such that $u < v < w$ in the arbitrary ordering. The \lexicographicTemporization ensures that $\lambda(vw) < \lambda(uv)$, meaning no temporal path exists using these successive arcs. The same holds for arcs 
    $wv$ and $vu$, and for arcs 
    $uw$ and $wv$, again with $u<v<w$. The remaining case is arcs 
    $wu$ and $uv$ which the temporization does allow to form a temporal path, but the temporal path is then stuck by the previous case analysis. 
 \end{proof}




The following easy proposition allows us to concentrate only on {\simplecycle}s and {\weakcycle}s in the remainder of the section, as for \strongcycles we can use the same strategy as in~\cite{Bang-JensenBGP22} to obtain a temporization which will use only two values, i.e. providing a \texttt{yes} answer for \textsc{Strong} \AcyclicTemporization already for $\tau=2$. In particular, it is enough to order the vertices of the input digraphs and give times to the arcs as follows: $uv$ arcs such that $u<v$ are assigned time 1, and time 2 otherwise.

\begin{proposition}
$\normalfont{(\star)}$ Let $D$ be a directed graph. Then there always exists a temporization $\lambda:E(D)\rightarrow 2^{[2]}$ of $D$ such that $(D,\lambda)$ contains no {\strongcycle}s.
\label{prop:strongtemp}
\end{proposition}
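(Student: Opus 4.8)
The plan is to use the two-value temporization explicitly described in the statement: fix an arbitrary linear order on $V(D)$, write $u < v$ for vertices, assign $\lambda(uv) = 1$ to every arc $uv$ with $u < v$ (a ``forward'' arc), and $\lambda(uv) = 2$ to every arc $uv$ with $u > v$ (a ``backward'' arc). I would then argue that with this temporization every temporal walk, hence every temporal path, first traverses some (possibly empty) block of forward arcs and then some (possibly empty) block of backward arcs. Indeed, in the non-strict model a temporal walk $(w_1, t_1, w_2, \dots, t_q, w_{q+1})$ requires $t_1 \le t_2 \le \dots \le t_q$, and since the only times used are $1$ and $2$, once a backward arc (time $2$) is used, every subsequent arc must also have time $2$, i.e.\ must be backward. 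So the walk has the shape: forward arcs along an increasing chain $w_1 < w_2 < \dots < w_{k+1}$, followed by backward arcs along a decreasing chain $w_{k+1} > w_{k+2} > \dots > w_{q+1}$.

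The key consequence I want is that in such a walk $w_1$ is the $<$-minimum of all vertices on the walk, and more importantly that no vertex can reach itself: a temporal $x,x$-walk $P$ with $E(P) = E(C)$ for a cycle $C$ would have to return to $x$, but the ``up then down'' structure forces $w_{q+1} \ge w_1$ along the down-part only if the down-part is empty and $\ge w_1$ along the up-part, and a careful look shows $w_1$ is the unique minimum reached, so $w_{q+1} = w_1$ is impossible unless the walk is trivial. More precisely, for a \strongcycle we need, for \emph{every} $x \in V(C)$, a temporal $x,x$-path using exactly the arcs of $C$. Pick $x$ to be the $<$-maximum vertex of $C$. A temporal $x,x$-path starting at $x$ must begin with an arc leaving $x$; but in the cycle $C$, $x$ has exactly one out-arc, say $xy$, and since $x$ is the maximum, $x > y$, so $\lambda(xy) = 2$. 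By the structure above, once we use a time-$2$ arc, all remaining arcs have time $2$ and form a strictly decreasing chain, which can never return to the maximum vertex $x$. Hence no temporal $x,x$-path on $E(C)$ exists, so $C$ is not a \strongcycle. Since $C$ was an arbitrary cycle of $D$, $(D,\lambda)$ contains no \strongcycles.

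I would write this up in roughly three steps: (1) state the temporization and observe it maps into $2^{[2]}$; (2) prove the ``forward arcs then backward arcs'' lemma for temporal walks under this temporization, using only that the image of $\lambda$ is $\{1,2\}$ and the non-decreasing condition; (3) take any cycle $C$, let $x = \max_< V(C)$, note $x$ has a unique out-neighbour in $C$ which is smaller, so the out-arc at $x$ has time $2$, and conclude by step (2) that any temporal walk on $E(C)$ starting at $x$ is strictly decreasing and cannot return to $x$, so $C$ is not a \strongcycle.

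The argument has essentially no hard part; the only thing requiring a moment of care is step (2) and making sure the edge case of trivial paths is excluded by the paper's convention (all temporal paths considered are non-trivial), and being careful that ``$E(P) = E(C)$'' together with $P$ being a path forces $P$ to traverse all arcs of $C$ exactly once, so that the out-arc of $x$ in $C$ is genuinely forced to be the first arc of $P$. If I wanted an even slicker phrasing, I could invoke \Cref{lemma:lexicographicTemporizationJourneysAtMostLengthTwo}-style reasoning, but here the direct observation about the $\{1,2\}$-valued time function is cleaner and self-contained.
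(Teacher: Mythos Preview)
Your proposal is correct and takes essentially the same approach as the paper: the temporization is identical, and the argument proceeds by exhibiting, for an arbitrary cycle $C$, a single vertex that admits no temporal self-path along $C$. The only cosmetic difference is the choice of witness vertex---you take the $<$-maximum of $V(C)$ (whose out-arc in $C$ has time $2$ and in-arc has time $1$), whereas the paper takes the predecessor of a local $<$-minimum (same property); both choices immediately force a forbidden $2$-then-$1$ transition.
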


\subsection{Simple-cycles}
\label{sub:simple}

The following result characterizes the answer to \textsc{Simple} \AcyclicTemporization wrt the girth of the input digraph.

\begin{lemma}
    \textsc{Simple} \AcyclicTemporization is always \texttt{\emph{yes}} when the girth of the input digraph is at least 3, and \texttt{\emph{no}} otherwise.
    \label{thm:unbosimpleacy}
\end{lemma}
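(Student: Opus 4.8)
The plan is to split into the two directions of the characterization. For the \texttt{no} direction, suppose the girth of $D$ is at most $2$. Girth $1$ would mean a loop, which is not allowed in simple digraphs, so the girth is exactly $2$, i.e. $D$ contains a digon, say arcs $uv$ and $vu$. Under any temporization $\lambda$, pick any $t_1 \in \lambda(uv)$ and any $t_2 \in \lambda(vu)$. If $t_1 \le t_2$ then $(u,t_1,v,t_2,u)$ is a temporal $u,u$-walk whose arc set is exactly the digon, so it is a \simplecycle; symmetrically, if $t_2 \le t_1$ we get a \simplecycle through $v$. Since $t_1 \le t_2$ or $t_2 \le t_1$ always holds, no temporization avoids a \simplecycle, so the answer is \texttt{no}.

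For the \texttt{yes} direction, assume the girth of $D$ is at least $3$, so $D$ has no digons (and no loops). I would apply the \lexicographicTemporization to $D$. By \Cref{lemma:lexicographicTemporizationJourneysAtMostLengthTwo}, every temporal path in the resulting temporal graph has length at most $2$. Now suppose for contradiction that there is a \simplecycle $C$: then there is a temporal $x,x$-path $P$ with $E(P) = E(C)$ for some $x \in V(C)$. In particular $P$ is a temporal path (all its vertices are distinct except that it starts and ends at $x$), and $|E(P)| = |E(C)| \ge 3$ since $C$ is a cycle of length at least the girth, which is at least $3$. But a temporal $x,x$-path with at least $3$ arcs has length at least $3$, contradicting the length-at-most-two bound. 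Hence no \simplecycle exists, and the answer is \texttt{yes}.

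There is a small subtlety to be careful about in the \texttt{yes} direction: the definition of \simplecycle requires a \tpath{$x,x$}, which by the paper's conventions is a temporal walk whose internal vertices are all distinct (a closed temporal path). One must make sure the length argument still applies — it does, because such a closed path of length $q$ uses $q$ arcs and visits $q$ distinct vertices, and a cycle of length at least $3$ forces $q \ge 3$. The only real obstacle is making the bookkeeping between ``cycle $C$ in $D$'', ``length of $C$'', and ``length of the temporal path $P$ realizing it'' airtight, together with correctly invoking \Cref{lemma:lexicographicTemporizationJourneysAtMostLengthTwo}; once that is set up, both directions are short. Optionally, one could remark that the \texttt{yes} direction can also be phrased via the \lexicographicTemporization making a bijection $A(D) \to [m]$, matching the statement in the contributions section.

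\begin{proof}
If the girth of $D$ is at most $2$, then since simple digraphs have no loops, $D$ contains a digon, i.e. arcs $uv$ and $vu$ for some $u \neq v$. Given any temporization $\lambda$, choose $t_1 \in \lambda(uv)$ and $t_2 \in \lambda(vu)$. If $t_1 \leq t_2$, then $(u, t_1, v, t_2, u)$ is a \tpath{$u,u$} whose arc set equals the arc set of the digon, hence a \simplecycle; otherwise $t_2 \leq t_1$ and $(v, t_2, u, t_1, v)$ is a \simplecycle. In either case $(D,\lambda)$ contains a \simplecycle, so the answer is \texttt{no}.

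Now suppose the girth of $D$ is at least $3$; in particular $D$ has no loops and no digons. Apply the \lexicographicTemporization to $D$, obtaining $\mathcal{D} = (D,\lambda)$. By \Cref{lemma:lexicographicTemporizationJourneysAtMostLengthTwo}, every temporal path in $\mathcal{D}$ has length at most $2$. Suppose, for contradiction, that $\mathcal{D}$ contains a \simplecycle $C$. Then there exist $x \in V(C)$ and a \tpath{$x,x$} $P$ with $E(P) = E(C)$. Since $C$ is a cycle of $D$, its length is at least the girth of $D$, hence $|E(C)| \geq 3$, so $P$ has length $|E(P)| = |E(C)| \geq 3$. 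But $P$ is a temporal path and thus has length at most $2$, a contradiction. Therefore $\mathcal{D}$ contains no \simplecycle, and the answer is \texttt{yes}.
\end{proof}
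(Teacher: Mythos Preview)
Your proof is correct and follows essentially the same approach as the paper: invoke \Cref{lemma:lexicographicTemporizationJourneysAtMostLengthTwo} after applying the \lexicographicTemporization for the \texttt{yes} direction, and observe that a digon is unavoidably a \simplecycle for the \texttt{no} direction. Your write-up is in fact more detailed than the paper's (which dismisses the girth-$2$ case as trivial), and your explicit acknowledgment of the subtlety that $P$ is a closed temporal path rather than an open one is a nice touch, though both you and the paper ultimately rely on the fact that the proof of \Cref{lemma:lexicographicTemporizationJourneysAtMostLengthTwo} is really about consecutive arcs and hence applies to walks as well.
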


\begin{proof}
    By \Cref{lemma:lexicographicTemporizationJourneysAtMostLengthTwo}, the application of the \lexicographicTemporization ensures that any cycle of size at least 3 cannot be a \simplecycle, since a temporal path of length at least 3 is required. Hence, graphs of girth at least 3 can always be made acyclic through the \lexicographicTemporization. Concerning graphs of girth 2, we trivially note that no temporization can avoid a \simplecycle.  
 \end{proof}

\subsubsection{Lifetime at most 2.}

In the following, we prove that \textsc{Simple} \AcyclicTemporization becomes \NP-hard when the lifetime of the resulting temporal digraph is constrained to be at most 2. To this aim, we first observe the following property.

\begin{proposition}\label{prop:C4_label}
$\normalfont{(\star)}$ Let $\mathcal{D} = (D,\lambda)$ be a temporal digraph with lifetime 2. If $\mathcal{D}$ has no {\simplecycle}s, then $D$ has no cycles on less than 4 vertices. Additionally, if $C = (a,e_1,b,e_2,c,e_3,d,e_4,a)$ is a cycle in $D$, then $\lvert \lambda(e_i)\rvert = 1$ for every $i\in [4]$, and $\lambda(e_1) =\lambda(e_3) \neq \lambda(e_2) = \lambda(e_4)$. 
\end{proposition}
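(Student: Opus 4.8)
The plan is to prove the two assertions separately, using the structure of lifetime-2 temporizations. For the first assertion — that $D$ has no cycle on fewer than $4$ vertices — I would argue by contradiction. Suppose $D$ contains a cycle $C$ on $2$ or $3$ vertices. Since the lifetime is $2$, any arc $e$ has $\lambda(e)\subseteq\{1,2\}$, so $\lambda(e)$ is one of $\{1\}$, $\{2\}$, or $\{1,2\}$. For a digon $C=(a,e_1,b,e_2,a)$, no matter which of these three sets $\lambda(e_1)$ and $\lambda(e_2)$ are, there is a choice of $t_1\in\lambda(e_1)$ and $t_2\in\lambda(e_2)$ with $t_1\le t_2$: just take $t_1=\min\lambda(e_1)$ and $t_2=\max\lambda(e_2)$ and observe $t_1\le 2$, $t_2\ge 1$; the only way to fail would be $t_1=2, t_2=1$, i.e. $\lambda(e_1)=\{2\}$ and $\lambda(e_2)=\{1\}$, but then we swap the roles ($b$ reaches $a$ at time $2$ and $a$ reaches $b$ at time $1$ — wait, we need a closed walk). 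More carefully: we need a temporal $a,a$-path using both arcs, i.e. $t_1\in\lambda(e_1)$, $t_2\in\lambda(e_2)$, $t_1\le t_2$. If $\lambda(e_1)=\{2\}$ and $\lambda(e_2)=\{1\}$ this is impossible starting at $a$, but then starting at $b$ we need $t_2\le t_1$, which holds; since a \simplecycle only requires \emph{some} $x\in V(C)$, the digon is always a \simplecycle. For a triangle $C=(a,e_1,b,e_2,c,e_3,a)$: among the three arcs, by pigeonhole two of the values $\min\lambda(e_i)$ repeat or we can order them; concretely, at least one of the cyclic rotations of $(\lambda(e_1),\lambda(e_2),\lambda(e_3))$ admits a non-decreasing choice $t_1\le t_2\le t_3$ — e.g. pick the arc whose set contains $1$ (if none does, all are $\{2\}$ and we are done trivially) to be $e_1$, take $t_1=1$, then greedily $t_2=\min\{s\in\lambda(e_2): s\ge t_1\}$ which exists since $\lambda(e_2)\ni 2\ge 1$, and $t_3=\min\{s\in\lambda(e_3):s\ge t_2\}$, which exists for the same reason ($2\in\lambda(e_3)$ and $t_2\le 2$). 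Hence a triangle is always a \simplecycle. This contradicts the hypothesis, so the girth of $D$ is at least $4$.

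For the second assertion, let $C=(a,e_1,b,e_2,c,e_3,d,e_4,a)$ be a $4$-cycle in $D$. First I would show $\lvert\lambda(e_i)\rvert=1$ for each $i$. If some $\lambda(e_i)=\{1,2\}$, I claim $C$ is a \simplecycle. Suppose $\lambda(e_1)=\{1,2\}$ (the other cases are symmetric by relabeling the starting vertex of the traversal). Starting at $a$, I want a non-decreasing sequence $t_1\le t_2\le t_3\le t_4$ with $t_i\in\lambda(e_i)$. If $1\in\lambda(e_2)$, take $t_1=1,t_2=1$, then $t_3=\min\{s\in\lambda(e_3):s\ge 1\}$ and $t_4=\min\{s\in\lambda(e_4):s\ge t_3\}$; both exist since $2$ lies in every nonempty subset-or-not... — here I must be careful: $\lambda(e_3)$ could be $\{1\}$ only, which still gives $t_3=1\ge 1$, fine; and if $\lambda(e_3)=\{2\}$ then $t_3=2$ and we need $2\in\lambda(e_4)$, which may fail if $\lambda(e_4)=\{1\}$. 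So the genuine case analysis is: given that $e_1$ is flexible ($\{1,2\}$), we essentially have a temporal path on $e_2,e_3,e_4$ that is allowed to "start" at either time, plus $e_1$ bridging $d$ back to... no: $e_1$ goes $a\to b$. Let me restructure: since $\lambda(e_1)=\{1,2\}$, for \emph{any} times chosen on $e_2,e_3,e_4$ forming a non-decreasing triple $t_2\le t_3\le t_4$ on the path $b\to c\to d\to a$, we can prepend $t_1=t_2$ on $e_1$, giving a closed temporal walk from $a$. So it suffices to show $b\to c\to d\to a$ via $e_2,e_3,e_4$ admits a non-decreasing triple. By the triangle-style argument above (three arcs, each subset of $\{1,2\}$ nonempty), this is always possible. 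Hence $C$ is a \simplecycle, contradiction; so every $\lvert\lambda(e_i)\rvert=1$.

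Finally, with each $\lambda(e_i)$ a singleton in $\{1,2\}$, write $\lambda(e_i)=\{a_i\}$ with $a_i\in\{1,2\}$. I would enumerate the possible patterns $(a_1,a_2,a_3,a_4)$ up to cyclic rotation and show that the only one avoiding a \simplecycle is the "alternating" one, which forces $a_1=a_3\ne a_2=a_4$. Concretely: $C$ is a \simplecycle iff some cyclic rotation of $(a_1,a_2,a_3,a_4)$ is non-decreasing. A $0/1$-sequence (after shifting $1\mapsto 0$, $2\mapsto 1$) of length $4$ has \emph{no} cyclic rotation that is non-decreasing iff it has at least two "descents" cyclically, and for a $4$-cycle over $\{1,2\}$ the only such sequence is $(1,2,1,2)$ (equivalently $(2,1,2,1)$ by rotation) — every sequence with at most one of each descent pattern, or with a value appearing three times, admits a non-decreasing rotation. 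I would verify this by a short finite check of the $2^4=16$ patterns (or the $6$ necklaces), concluding $\lambda(e_1)=\lambda(e_3)\ne\lambda(e_2)=\lambda(e_4)$.

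\textbf{Main obstacle.} The routine but error-prone part is keeping the "\emph{there exists} $x\in V(C)$" quantifier in the \simplecycle definition straight: a cycle is a \simplecycle as soon as \emph{some} cyclic rotation of the arc-time sequence is non-decreasing, so the finite case checks must range over all rotations, not just the traversal starting at a fixed vertex. The one genuinely delicate point is ruling out flexible arcs ($\lambda(e_i)=\{1,2\}$) in a $4$-cycle: one must confirm that a flexible arc, combined with the three remaining arcs each carrying a nonempty subset of $\{1,2\}$, always yields a non-decreasing closed rotation — this reduces to the fact that any three nonempty subsets of $\{1,2\}$, in cyclic order, admit a non-decreasing selection, which is the crux and should be stated as a small standalone observation and reused for both the "girth $\ge 4$" part and the "singletons" part.
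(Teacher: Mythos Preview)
Your overall plan is sound and the final case enumeration for singleton labels on a $4$-cycle is correct, but there is a genuine gap in the ``flexible arc'' step. You argue that if $\lambda(e_1)=\{1,2\}$ then it suffices to find a non-decreasing selection $t_2\le t_3\le t_4$ along the \emph{path} $e_2,e_3,e_4$, and you justify this by invoking the triangle observation. But the triangle observation uses cyclic rotations, whereas here the order $e_2,e_3,e_4$ is fixed. The claim that three nonempty subsets of $\{1,2\}$ in a fixed linear order always admit a non-decreasing selection is false: take $\lambda(e_2)=\{2\}$, $\lambda(e_3)=\{1\}$, $\lambda(e_4)$ arbitrary. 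In this situation no temporal $a,a$-path exists, yet a \simplecycle does exist---starting at $c$ instead. So your reduction loses exactly the freedom (choice of starting vertex) that makes the claim true. The same conflation appears in your triangle paragraph, where you write ``$\lambda(e_2)\ni 2$'' to guarantee the greedy step; this need not hold, and again the rescue is to rotate.

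The paper's proof avoids this pitfall with a cleaner structural step: rather than first forcing $|\lambda(e_i)|=1$ and then enumerating $2^4$ patterns, it shows directly that $\lambda(e_i)\cap\lambda(e_{i+1})=\emptyset$ for consecutive arcs, because if two consecutive arcs shared a time they would behave as a single arc and the $4$-cycle would reduce to the triangle case already handled. With lifetime $2$ and each $\lambda(e_i)$ nonempty, disjointness of consecutive labels immediately forces the alternating singleton pattern. This subsumes both your ``singletons'' step and your enumeration in one move, and sidesteps the linear-versus-cyclic trap.
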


We are now ready to construct our reduction. We reduce from \textsc{Monotone NAE 3-SAT}. Let $\phi$ be a formula on variables $x_1,\ldots,x_n$ and clauses $c_1,\ldots,c_m$. For each variable $x_i$, 
add to $D$ a $2\times (2m-1)$ grid as in Figure~\ref{fig:variable_gadget_onecycle}. Formally, add vertices $\{a^i_1,\ldots,a^i_{2m-1},b^i_1,\ldots,b^i_{2m-1}\}$, make the set $A^i = \{a^i_1,\ldots,a^i_{2m-1}\}$ form a non-oriented path from $a^i_1$ to $a^i_{2m-1}$ and $B^i = \{b^i_1,\ldots,b^i_{2m-1}\}$ form a non-oriented path from $b^i_1$ to $b^i_{2m-1}$. Then, orient such paths in a way that even vertices on the path formed by $A^i$ are sinks within the path, while even vertices in the path formed by $B^i$ are sources within the path. Finally, add the matching $\{b^i_{2j-1}a^i_{2j-1},a^i_{2j}b^i_{2j}\mid j\in [m]\}$ (in words, the odd arcs point from $b$ to $a$ and the even ones from $a$ to $b$). Denote by $D_i$ the gadget related to variable $x_i$. Observe that for each $j\in[m]$, we have a cycle $C^i_j = (a^i_{2j-1},a^i_{2j},b^i_{2j},b^i_{2j-1},a^i_{2j-1})$ on 4 vertices. By Proposition~\ref{prop:C4_label}, we know that all vertical arcs within $D_i$ are going to be given the same time, as well as all horizontal arcs. Hence, we use the vertical arcs to pass the value of $x_i$. This is formalized in the next paragraph in the construction of the clause gadgets.

Now, consider clause $c_j=(x_{i_1}\vee x_{i_2} \vee x_{i_3})$. We will link the appropriate arcs within the gadgets of $x_{i_1},x_{i_2},x_{i_3}$ in a way that they cannot all be assigned the same time. See Figure~\ref{fig:clause_gadget_onecycle} to follow the construction. First create a new vertex, $c_j$ in $D$. 
Then, identify vertices $a^{i_1}_{2j-1},b^{i_2}_{2j-1}$ and vertices $a^{i_2}_{2j-1},b^{i_3}_{2j-1}$; after this operation, $e_1,e_2,e_3$ form a path, denoted by $P_j$. Finally, add arcs $a^{i_3}_jc_j$ and $c_jb^{i_1}_j$. For each $j\in[2m-1]$, we denote by $C_j$ the set of vertices in ``column $j$'', i.e., $C_j = \{a_j^i,b_j^i\mid i\in[n]\}$
Observe that $C_j$ induces a subgraph which is a matching together with path $P_j$ if $j$ is odd; otherwise $C_j$ induces a perfect matching from $a$'s to $b$'s. In either case, $C_j$ induces an acyclic subgraph.

\hspace{-0.6cm}\begin{minipage}{\linewidth}
      \centering
      \begin{minipage}{0.60\linewidth}
\begin{figure}[H]
  

  
\includegraphics[scale=.85]{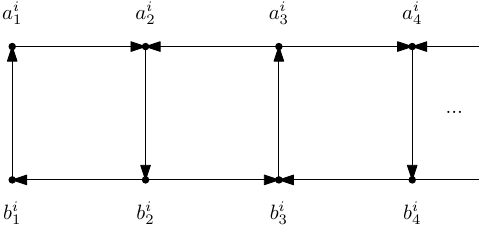}
\caption{Variable gadget in the reduction of \textsc{Simple} \AcyclicTemporization. There are a total of $2m-1$ $a^i_j$ vertices and $2m-1$ $b^i_j$ vertices per variable gadget.
\label{fig:variable_gadget_onecycle}}
\end{figure}
      \end{minipage}\quad\begin{minipage}{0.37\linewidth}
\begin{figure}[H]
  

  
\includegraphics[scale=.85]{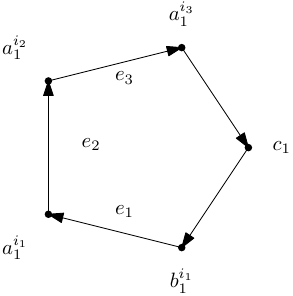}
\caption{Clause gadget in the reduction for \textsc{Simple} \AcyclicTemporization. The vertices of edges $e_j$ are identified with vertices of variable gadgets.
\label{fig:clause_gadget_onecycle}}
\end{figure}
      \end{minipage}
\vspace{0.1cm}
\end{minipage}

\begin{theorem}
$\normalfont{(\star)}$ Given a digraph $D$, deciding whether there exists a temporization $\lambda:E(D)\rightarrow 2^{[2]}$ such that $\mathcal{D} = (D,\lambda)$ contains no {\simplecycle}s is $\NP$-complete.
\label{thm:simpleacy}
\end{theorem}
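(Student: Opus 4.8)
The plan is to establish membership in $\NP$ and then $\NP$-hardness via the reduction from \textsc{Monotone NAE 3-SAT} already set up in the excerpt. Membership is routine: a purported temporization $\lambda:E(D)\to 2^{[2]}$ is a polynomial-size certificate, and checking that $(D,\lambda)$ has no {\simplecycle} can be done in polynomial time by the algorithm behind \Cref{prop:simple_cycle_detection_poly}. So the bulk of the proof is the correctness of the reduction: $\phi$ is a \textsc{yes}-instance of \textsc{Monotone NAE 3-SAT} if and only if the constructed digraph $D=D(\phi)$ admits a temporization $\lambda:E(D)\to 2^{[2]}$ with no {\simplecycle}.

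First I would record the structural consequences of \Cref{prop:C4_label} inside each variable gadget $D_i$: since $D_i$ is a stack of $4$-cycles $C^i_1,\dots,C^i_m$ sharing vertical arcs along the two horizontal paths, the proposition forces every vertical arc of $D_i$ to receive a single time and all of them the same time $t_i\in\{1,2\}$, and likewise all horizontal arcs of $D_i$ the single time $3-t_i$. This is the ``truth value'' of $x_i$: set $x_i$ to \textsc{true} iff $t_i=1$ (say). Conversely, given a truth assignment, I would define $\lambda$ on each $D_i$ by this rule, and on each clause vertex's two incident arcs $a^{i_3}_j c_j$ and $c_j b^{i_1}_j$ by whatever value is needed — here I must check that the two new arcs at $c_j$, together with the identified path $P_j$ made of vertical arcs $e_1,e_2,e_3$ coming from the three variable gadgets, do not close a short temporal cycle: the only cycle through $c_j$ has length $\ge 4$ (in fact it runs $c_j\to b^{i_1}_j \to \cdots \to a^{i_3}_j \to c_j$ along horizontal and vertical arcs), so by \Cref{lemma:lexicographicTemporizationJourneysAtMostLengthTwo}-style reasoning — or directly — a {\simplecycle} through it would need a temporal path of length $\ge 4$, and I would argue the alternation of horizontal/vertical times blocks any temporal path of length $\ge 3$ within a single gadget, while cross-gadget travel must pass through a clause vertex whose two arcs can be timed to break continuity. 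The NAE constraint enters precisely at $P_j$: the three vertical arcs $e_1,e_2,e_3$ (carrying $t_{i_1},t_{i_2},t_{i_3}$) are consecutive on $P_j$, and together with the two arcs at $c_j$ they form a $4$- or $5$-cycle; by \Cref{prop:C4_label} (or its analogue for the relevant short cycle), this cycle fails to be a {\simplecycle} exactly when the three vertical times are not all equal, i.e. when $x_{i_1},x_{i_2},x_{i_3}$ are not all assigned the same value — which is the not-all-equal condition (monotonicity makes ``not all true'' and ``not all false'' symmetric, covering both clauses of NAE).

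For the forward direction I would argue: given any valid $\lambda$, read off $t_i$ from the forced value of the vertical arcs of $D_i$ (forced by \Cref{prop:C4_label} applied to $C^i_1$, then propagated along the gadget); the corresponding assignment is NAE-satisfying because if some clause $c_j$ had all three literals equal, then $e_1,e_2,e_3$ on $P_j$ all carry the same time and one checks the small cycle through $c_j$ becomes a {\simplecycle}, a contradiction. For the backward direction, given an NAE assignment, build $\lambda$ as above and verify no {\simplecycle} exists by a short case analysis: any cycle of $D(\phi)$ is either contained in one variable gadget (a $4$-cycle, acyclic by the $1,2,1,2$ alternation of \Cref{prop:C4_label}), or passes through at least one clause vertex and at least one ``column'' $C_j$ which the excerpt already notes induces an acyclic subgraph, so such a cycle has length $\ge 4$ and traverses arcs whose times, by construction, never form a temporal path spanning the whole cycle.

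The main obstacle I expect is the backward direction's case analysis: I must enumerate the possible cycles of $D(\phi)$ (those inside a gadget, those using one clause vertex, and — importantly — those using several clause vertices and threading between gadgets) and show that the proposed $\lambda$, whose values on variable gadgets are globally fixed by the truth assignment, leaves no room for a length-$\ge 3$ temporal path to ``turn the corner'' from a horizontal run to a vertical run or from one gadget into another. Getting the alternation of times right so that this blocking is airtight — in particular verifying the claimed time assignment on the two arcs incident to each $c_j$ is simultaneously consistent across all clauses sharing the same variable gadgets — is where the real work lies; the rest follows the template of \Cref{thm:unbosimpleacy} and \Cref{prop:C4_label}.
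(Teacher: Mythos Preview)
Your plan matches the paper's approach: membership via \Cref{prop:simple_cycle_detection_poly}, hardness from \textsc{Monotone NAE 3-SAT} using the described gadgets, with \Cref{prop:C4_label} forcing all vertical arcs of each $D_i$ to carry a single common time (the truth value of $x_i$), and the clause gadget enforcing the not-all-equal constraint.

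Two places to tighten. First, the clause gadget is a $5$-cycle $b^{i_1}_{2j-1}\to a^{i_1}_{2j-1}\to a^{i_2}_{2j-1}\to a^{i_3}_{2j-1}\to c_j\to b^{i_1}_{2j-1}$, and your claim that it ``fails to be a \simplecycle exactly when $e_1,e_2,e_3$ are not all equal'' is false as stated: the two arcs at $c_j$ are still free, and a bad choice (e.g.\ $e_1=e_2=1$, $e_3=2$, both clause arcs $=2$) produces a \simplecycle even under NAE. The paper fixes this by setting $\lambda(c_jb^{i_1}_{2j-1})\ne\lambda(e_1)$ and $\lambda(a^{i_3}_{2j-1}c_j)=\lambda(e_1)$; conversely, when $e_1=e_2=e_3$ one checks that every choice for the two remaining arcs yields a \simplecycle. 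Second, drop the appeal to \Cref{lemma:lexicographicTemporizationJourneysAtMostLengthTwo}-style reasoning for longer cycles; that lemma concerns a different temporization. The actual argument (which you do gesture at) is the column/alternation one: any cycle not confined to a single odd column must cross some even column $C_{2j}$, hence use an arc $a^i_{2j}b^i_{2j}$, and the forced alternation of horizontal and vertical times around that arc prevents a non-decreasing traversal of the whole cycle.
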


\subsection{Weak-cycles}
\label{sec:weak-cycles}

In the following, we prove some results for \textsc{Weak} \AcyclicTemporization depending on the girth of the input digraph. In particular, we show that whenever the girth is different from 4 we know an exact answer for our problem, while we leave open the case where the girth is 4.

\begin{lemma}
    $\normalfont{(\star)}$ \textsc{Weak} \AcyclicTemporization is always \texttt{\emph{yes}} when the girth of the input digraph is at least 5, and \texttt{\emph{no}} if girth is at most 3.
    \label{thm:unboweakacy}
\end{lemma}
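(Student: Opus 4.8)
The plan is to treat the two claimed regimes separately. For girth at most $3$, I would show that \emph{every} temporization of $D$ already contains a \weakcycle, by exhibiting the two required temporal paths directly. If the girth is $2$, fix a digon, i.e.\ arcs $uv$ and $vu$, and let $C=(u,v,u)$. For any $\lambda$, the single arc $uv$ (with any time in the nonempty set $\lambda(uv)$) is a \tpath{$u,v$}, and likewise $vu$ gives a \tpath{$v,u$}; their edge union is $E(C)$, so $C$ is a \weakcycle with $x=u$, $y=v$. If the girth is $3$, fix a directed triangle $C=(a,b,c,a)$ and pick times $t_1\in\lambda(ab)$, $t_2\in\lambda(bc)$, $t_3\in\lambda(ca)$. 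These three values cannot form the cyclic chain $t_1>t_2>t_3>t_1$, so at least one ``ascent'' holds along the triangle: $t_1\le t_2$, or $t_2\le t_3$, or $t_3\le t_1$. In each case the two arcs of the ascent form a length-$2$ temporal path $P$ between their endpoints, the remaining single arc of $C$ is a (trivially valid) temporal path $P'$ in the opposite direction, and $E(P)\cup E(P')=E(C)$; thus $C$ is a \weakcycle. Hence in both subcases the answer is \texttt{no}.

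For girth at least $5$, I would simply apply the \lexicographicTemporization to $D$. By \Cref{lemma:lexicographicTemporizationJourneysAtMostLengthTwo}, every temporal path in the resulting temporal digraph has length at most $2$. Suppose for contradiction that some cycle $C$ of $D$ is a \weakcycle: there are a \tpath{$x,y$} $P$ and a \tpath{$y,x$} $P'$ with $E(P)\cup E(P')=E(C)$. Since $|E(P)|\le 2$ and $|E(P')|\le 2$, we get $|E(C)|=|E(P)\cup E(P')|\le 4$, so $C$ has at most $4$ arcs. This contradicts the fact that every cycle of $D$ has length at least its girth, which is at least $5$. Therefore $(D,\lambda)$ contains no \weakcycle, and the answer is \texttt{yes}.

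I do not expect a genuine obstacle in either direction: the girth-$2$ and girth-$3$ witnesses are immediate, and the girth-$\ge 5$ case is a one-line consequence of the length-$2$ bound of \Cref{lemma:lexicographicTemporizationJourneysAtMostLengthTwo}. The one point worth stating explicitly is precisely why the same argument fails to settle girth $4$: a \weakcycle on $4$ vertices may decompose into two temporal paths of length exactly $2$, so the length-$2$ bound no longer excludes it, and neither an obviously good temporization nor an obvious hardness reduction presents itself — which is why that case is left open.
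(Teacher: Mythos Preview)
Your proposal is correct and follows essentially the same approach as the paper: apply the \lexicographicTemporization and invoke \Cref{lemma:lexicographicTemporizationJourneysAtMostLengthTwo} for girth at least~$5$, and exhibit the obvious \weakcycle on a digon or triangle for girth at most~$3$. Your girth-$3$ argument is spelled out a bit more explicitly (via the impossibility of the strict cyclic chain $t_1>t_2>t_3>t_1$) than the paper's one-line remark, but the underlying idea is identical.
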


\begin{proof}
    By \Cref{lemma:lexicographicTemporizationJourneysAtMostLengthTwo}, applying the \lexicographicTemporization ensures that any cycle of size at least 5 cannot be a \weakcycle, since for any pair $u,v$ of vertices in a cycle on $5$ vertices a temporal path of length at least 3 is required in order to have a \tpath{$v,u$} and a \tpath{$u,v$}. Hence, graphs of girth at least 5 can always be made acyclic through the \lexicographicTemporization. One can easily note that any temporization of a cycle on 3 vertices induces a temporal path of length at least 2, and thus contains a \weakcycle. Trivially, cycles on 2 vertices are \weakcycle. Therefore, no temporization on digraphs of girth at most 3 can avoid a \weakcycle.
 \end{proof}

\hspace{-0.3cm}\begin{minipage}{\linewidth}
      \centering
      \begin{minipage}{0.60\linewidth}
\begin{figure}[H]
\includegraphics[scale=.85]{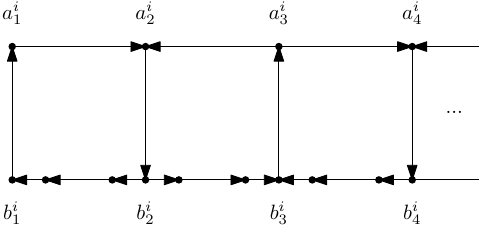}
\caption{Variable gadget in the reduction of \textsc{Weak} \AcyclicTemporization. There are a total of $2m-1$ $a^i_j$ vertices and $2m-1$ $b^i_j$ vertices per variable gadget.
\label{fig:variable_gadget_twocycle}}
\end{figure}
\end{minipage}\quad\begin{minipage}{0.37\linewidth}
\begin{figure}[H]
\begin{center}
  
  


\includegraphics[scale=.85]{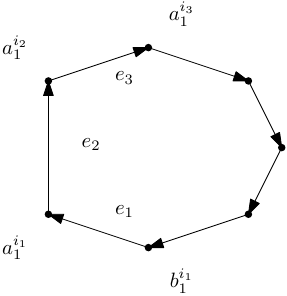}
\caption{Clause gadget in the reduction for \textsc{Weak} \AcyclicTemporization. The vertices of edges $e_j$ are identified with vertices of variable gadgets.
\label{fig:clause_gadget_twocycle}}
\end{center}
\end{figure}
  

\end{minipage}
\end{minipage}


\subsubsection{Lifetime at most~2.}
We present a reduction showing that \textsc{Weak} \AcyclicTemporization is \NP-complete if the lifetime is constrained to be $2$. We make use of the following proposition.

\begin{proposition}\label{prop:C6_label}
$\normalfont{(\star)}$ Let $\mathcal{D} = (D,\lambda)$ be a temporal digraph with lifetime 2. If $\mathcal{D}$ has no {\weakcycle}s, then $D$ has no cycles on less than 6 vertices. Additionally, if $C = (a,e_1,b,e_2,c,e_3,d,e_4,e, e_5,f,e_6,a)$ is a cycle in $D$, then $\lvert \lambda(e_i)\rvert = 1$ for every $i\in [6]$, and $\lambda(e_1) =\lambda(e_3)=\lambda(e_5) \neq \lambda(e_2) = \lambda(e_4) = \lambda(e_6)$. 
\end{proposition}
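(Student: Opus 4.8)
The plan is to reduce the statement to an elementary question about cyclic words, in the same spirit as the proof of \Cref{prop:C4_label}. Since $\tau=2$, every arc $e$ has $\lambda(e)\in\{\{1\},\{2\},\{1,2\}\}$; call $e$ \emph{light} if $\lambda(e)=\{1\}$, \emph{heavy} if $\lambda(e)=\{2\}$, and \emph{free} otherwise. The first step I would carry out is the characterization: a directed path $(a_1,\dots,a_p)$ in $\mathcal{D}$ is a temporal path if and only if no heavy arc appears strictly before a light arc along it. Indeed, a choice $t_i\in\lambda(a_i)$ is non-decreasing exactly when there is a threshold index before which every $t_i=1$ and from which on every $t_i=2$, and such a threshold exists iff no arc forced to value $2$ precedes an arc forced to value $1$. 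Equivalently: deleting the free arcs, the remaining light/heavy subsequence must read as $L^\ast H^\ast$.

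Second, I would observe that the only directed paths using arcs of a cycle $C$ are its contiguous forward sub-paths, so a pair $x\neq y$ of $V(C)$ witnesses that $C$ is a \weakcycle precisely when the forward sub-path from $x$ to $y$ and the forward sub-path from $y$ to $x$ are both temporal (these automatically partition $E(C)$); the remaining case $x=y$ just says $C$ is a \simplecycle, which is then also a \weakcycle via any such cut. Hence: $C$ is a \weakcycle if and only if its cyclic arc-order can be cut at two distinct vertices into two non-empty contiguous sub-paths that are both temporal.

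The combinatorial core is then the following. Let $\sigma$ be the cyclic word over $\{L,H\}$ listing, in cyclic order, the light/heavy arcs of $C$ (free arcs omitted); by the first step a contiguous sub-path of $C$ is temporal iff the induced factor of $\sigma$ has the form $L^\ast H^\ast$. I would prove two facts. \textbf{(a)} If $\sigma$ has at most one $L$ or at most one $H$, then $C$ is a \weakcycle: isolate that single light (resp.\ heavy) arc, whose complement is then an $H^\ast$ (resp.\ $L^\ast$) sub-path, or cut anywhere if the letter is absent altogether. \textbf{(b)} If $\sigma$ has at least two $L$'s and at least two $H$'s, then $C$ is a \weakcycle iff $\sigma$, read cyclically, has at most two maximal blocks of consecutive $L$'s (equivalently, at most two $HL$-transitions). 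For the ``if'' direction, cut $\sigma$ just before each of its $L$-blocks when there are two, or inside its unique $L$-block when there is one: both factors are $L^\ast H^\ast$; then lift this to a cut of $C$ by placing each free arc on whichever side it falls (or either side if it sits exactly at a cut), obtaining two non-empty temporal sub-paths. For ``only if'', a temporal sub-path contributes no $HL$-transition, a cyclic word with $p$ maximal $L$-blocks has exactly $p$ of them, and two cuts can separate at most two consecutive pairs of letters, so at least $p-2$ transitions stay internal to one side; thus $p\le 2$.

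Finally I would conclude as follows. If $C$ has $k\le 5$ arcs and is not a \weakcycle, then by (a) it has at least two light and at least two heavy arcs, so $\min(\#L,\#H)=2$, so $\sigma$ has at most two $L$-blocks, contradicting (b); this gives the first assertion. For the second, let $C$ be a cycle on arcs $e_1,\dots,e_6$ that is not a \weakcycle; by (a), $\#L\ge 2$ and $\#H\ge 2$. If some $e_i$ were free then $\#L+\#H\le 5$, again forcing $\min(\#L,\#H)=2$ and at most two $L$-blocks, contradicting (b); hence every $e_i$ is light or heavy and $|\lambda(e_i)|=1$ for all $i$. Now $\sigma$ has length $6$ with $\#L,\#H\ge 2$, and $C$ not being a \weakcycle forces, by (b), at least three $L$-blocks; since the number of $L$-blocks equals the number of $H$-blocks and the total number of blocks is at most $6$, there are exactly three of each, all of length one, i.e.\ $\sigma=(LH)^3$ cyclically. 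Thus $e_1,e_3,e_5$ are all of one kind and $e_2,e_4,e_6$ all of the other, giving $\lambda(e_1)=\lambda(e_3)=\lambda(e_5)\neq\lambda(e_2)=\lambda(e_4)=\lambda(e_6)$. I expect the combinatorial core (fact (b)) to be the only delicate point, specifically making the lifting of a cut of $\sigma$ to a cut of $C$ fully rigorous when free arcs sit exactly at a chosen cut, together with the transition-counting bookkeeping; everything else is routine.
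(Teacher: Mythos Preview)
Your proof is correct and takes a genuinely different route from the paper. The paper argues case by case: it invokes \Cref{prop:C4_label} to rule out cycles on fewer than $4$ vertices (since every \simplecycle is a \weakcycle), then shows directly that a $4$-cycle carrying the labeling forced by \Cref{prop:C4_label} is a \weakcycle, handles the $5$-cycle by observing that two consecutive arcs sharing a timestep ``collapse'' to reduce to the $4$-cycle case, and for the $6$-cycle argues once more that no two consecutive arcs may share a timestep, which forces the alternating pattern. Your approach instead gives a uniform combinatorial characterization: encode the forced labels as a cyclic word $\sigma$ over $\{L,H\}$, show that $C$ is a \weakcycle iff $\sigma$ has at most two maximal $L$-blocks (once $\sigma$ contains at least two of each letter), and then read off both conclusions by counting. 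This buys you a complete classification of which cycles of \emph{any} length are \weakcycles under lifetime~$2$ --- a strictly stronger statement --- at the cost of a somewhat longer argument; the paper's proof is quicker but less reusable, whereas yours would, for instance, immediately settle what happens for $7$-cycles. The lifting step you flag (choosing a cut vertex of $C$ when a cut of $\sigma$ falls inside a run of free arcs) is indeed the only place needing a sentence of care, but since distinct cut positions of $\sigma$ correspond to disjoint vertex ranges in $C$, it goes through without difficulty.
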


We now use a similar reduction to the one presented in Section~\ref{sub:simple} to prove that it is hard to construct a temporization with no {\weakcycle}s that uses only times within $2^{[2]}$. One can see the construction of the variable gadget $D_i$ as obtained from the previous construction by replacing each arc $dd'$ on the lower path by a directed path on three arcs (see Figure~\ref{fig:variable_gadget_twocycle}). Formally, add vertices $\{a^i_1,\ldots,a^i_{2m-1},b^i_1,\ldots,b^i_{2m-1}\}$ and the matching $\{b^i_{2j-1}a^i_{2j-1},a^i_{2j}b^i_{2j}\mid j\in [m]\}$ (in words, the odd arcs point from $b$ to $a$ and the even ones from $a$ to $b$). Denote the set $\{a^i_1,\ldots,a^i_{2m-1}\}$ by $A^i$ and the set $\{b^i_1,\ldots,b^i_{2m-1}\}$ by $B^i$; also let $A = \bigcup_{i\in[n]}A^i$ and $B = \bigcup_{i\in [n]}B^i$. 
Then, for each $i\in[n]$ make the set $A^i$ form a non-oriented path from $a^i_1$ to $a^i_{2m-1}$ and orient such path in a way that even vertices on the path formed by $A^i$ are sinks within the path. 
Finally, for each $j\in [m-1]$, create a path on three arcs from $b^i_{2j}$ to $b^i_{2j-1}$ and a path on three arcs from $b^i_{2j}$ to $b^i_{2j+1}$, adding four new vertices in order to do so. 
Again, denote by $D_i$ the gadget related to variable $x_i$ and observe that for each $j\in[m-1]$, we have a cycle on 6 vertices containing arcs $b^i_{2j-1}a^i_{2j-1}$ and $a^i_{2j}b^i_{2j}$ and that the latter arc is also within a cycle on 6 vertices with arc $b^i_{2j+1}a^i_{2j+1}$. As before, we call the arcs between $A$ and $B$ \emph{vertical arcs}. By Proposition~\ref{prop:C6_label}, we know that all vertical arcs within $D_i$ are going to be given the same time and hence can be used to pass the value of $x_i$. 

Now consider the clause $c_j = (x_{i_1}\vee x_{i_2} \vee x_{i_3})$. As before, we identify vertices in column $2j-1$ in order to make the vertical arcs of these variables form a path. 
Additionally, we want to create a $C_7$ containing such path (see Figure~\ref{fig:clause_gadget_twocycle}). Formally, add three new vertices and 4 arcs in order to form a cycle on 7 vertices containing the path $(b^{i_1}_{2j-1},a^{i_1}_{2j-1},a^{i_2}_{2j-1},a^{i_3}_{2j-1})$. This results in the following.

\begin{theorem}
$\normalfont{(\star)}$
Given a digraph $D$, deciding whether there exists a temporization $\lambda:E(D)\rightarrow 2^{[2]}$ such that $\mathcal{D} = (D,\lambda)$ contains no {\weakcycle}s is $\NP$-complete.
\label{thm:weakacy}
\end{theorem}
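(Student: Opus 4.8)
\textbf{Proof plan for Theorem~\ref{thm:weakacy}.}
The plan is to mirror the structure of the proof of Theorem~\ref{thm:simpleacy}, adapting it to the weak-cycle setting via Proposition~\ref{prop:C6_label} in place of Proposition~\ref{prop:C4_label}. Membership in $\NP$ is immediate: given a temporization $\lambda:E(D)\to 2^{[2]}$, one checks in polynomial time whether $(D,\lambda)$ has a \weakcycle by computing, for each ordered pair $u,v$ and each cycle of the underlying digraph, the relevant earliest-arrival and latest-departure times (as in Proposition~\ref{prop:weak_cycle_detection_poly}), so certificates are verifiable efficiently. The bulk of the argument is the correctness of the reduction from \textsc{Monotone NAE 3-SAT} described just above the statement.

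First I would fix the semantics of the gadgets. By Proposition~\ref{prop:C6_label}, in any \weakcycle-free temporization of lifetime~2, every vertical arc of a variable gadget $D_i$ receives a single time, and within the $C_6$'s of $D_i$ the vertical arcs alternate $\lambda(e_1)=\lambda(e_3)=\lambda(e_5)\neq\lambda(e_2)=\lambda(e_4)=\lambda(e_6)$; chaining these $C_6$'s along $D_i$ (successive $C_6$'s share a vertical arc) forces \emph{all} vertical arcs of $D_i$ to carry the same time $t_i\in\{1,2\}$, and all horizontal $A^i$-arcs to carry the other time $3-t_i$. (The internal arcs of the three-arc $b$-paths are free and can always be set so as to create no short temporal path; I would note explicitly that a directed path on three arcs, whose endpoints are the only vertices shared with the rest, can be timed $1,2,2$ or $2,1,1$ etc. so as never to extend a temporal path of length $\geq 1$ entering it, exactly as in Lemma~\ref{lemma:lexicographicTemporizationJourneysAtMostLengthTwo}'s spirit.) Thus a valid temporization induces a truth assignment $x_i \mapsto (t_i = 1)$, and conversely any truth assignment prescribes $t_i$ and $3-t_i$ on the vertical/horizontal arcs of $D_i$.

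Next I would analyze the clause gadget. For clause $c_j=(x_{i_1}\vee x_{i_2}\vee x_{i_3})$, identifying column-$(2j-1)$ vertices places the three vertical arcs $e_1,e_2,e_3$ of $x_{i_1},x_{i_2},x_{i_3}$ on a common directed path $(b^{i_1}_{2j-1},a^{i_1}_{2j-1},a^{i_2}_{2j-1},a^{i_3}_{2j-1})$, which is completed by three new vertices and four arcs into a $C_7$. The key claim is: this $C_7$ admits a \weakcycle-free lifetime-2 temporization if and only if $e_1,e_2,e_3$ are \emph{not} all assigned the same time. For the ``if'' direction, when the three forced times are not all equal I would exhibit explicit times on the four new arcs of the $C_7$ avoiding a \weakcycle — using that a $C_7$ with at most two ``runs'' of equal consecutive labels (in the cyclic sense) has no two internally-disjoint monotone paths between any pair, whereas a long monochromatic run of three arcs followed by a short completion does create such a pair. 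For the ``only if'' direction I would argue that if $e_1,e_2,e_3$ all get the same time, say $1$, then no assignment of $\{1,2\}$ to the remaining four arcs of the $C_7$ can kill all \weakcycles: a case split on the pattern of the four completion arcs shows that one always finds two vertices of the $C_7$ joined by a non-decreasing path in each direction with edge-disjoint arc sets covering the cycle. I would also remark that the extra clause vertices/arcs create no new short cycles elsewhere (every cycle of the constructed $D$ lies inside one $D_i$ or is one of these $C_7$'s), and columns $C_j$ induce acyclic subgraphs as already noted, so Proposition~\ref{prop:C6_label} governs all of them. Combining: $(D,\lambda)$ is \weakcycle-free iff for every clause the three relevant vertical arcs are not all equal iff the induced assignment satisfies every clause in the NAE sense, iff $\phi$ is a yes-instance of \textsc{Monotone NAE 3-SAT}.

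The main obstacle I anticipate is the clause-gadget case analysis in the $C_7$: unlike the $C_4$ in the simple-cycle reduction, where Proposition~\ref{prop:C4_label} pins down the labels tightly, a $C_7$ of lifetime~2 has many label patterns, and one must carefully characterize exactly which cyclic $\{1,2\}$-labelings of $C_7$ avoid a \weakcycle (equivalently, avoid two arc-disjoint non-decreasing paths in opposite directions between some pair), and verify that precisely the ``all three of $e_1,e_2,e_3$ equal'' configurations are the bad ones once the gadget's internal arcs are optimally chosen. Handling the interaction of the three free internal arcs with the three constrained arcs $e_1,e_2,e_3$ — i.e.\ showing the adversary can always patch things up when NAE is satisfied but never when it is violated — is where the real care is needed; the rest follows the template of Section~\ref{sub:simple} closely.
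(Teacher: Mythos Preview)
Your approach is essentially the paper's: membership in \NP\ via Proposition~\ref{prop:weak_cycle_detection_poly}, hardness by the same \textsc{Monotone NAE 3-SAT} reduction, propagation of vertical-arc times through the chained $C_6$'s via Proposition~\ref{prop:C6_label}, and the $C_7$ clause gadget being \weakcycle-free iff $e_1,e_2,e_3$ are not all equal. The paper's proof follows exactly this template.

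There is, however, one concrete misstep in your plan that would cause the forward direction to fail as written. You claim that ``every cycle of the constructed $D$ lies inside one $D_i$ or is one of these $C_7$'s'' and that Proposition~\ref{prop:C6_label} therefore governs all of them. This is false: the vertex identifications in column $2j-1$ glue different variable gadgets together, and if two variables co-occur in more than one clause, $D$ acquires long cycles that weave through several $D_i$'s and are neither $C_6$'s nor the clause $C_7$'s. Proposition~\ref{prop:C6_label} says nothing about such cycles. Consequently, to prove that an NAE assignment yields a \weakcycle-free temporization, it is not enough to check the explicit $C_6$'s and $C_7$'s; you must argue that \emph{no} cycle of $D$ becomes a \weakcycle. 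The paper does this (in the analogous simple-cycle reduction, and ``similarly'' here) by a global argument: columns induce acyclic subgraphs, so any cycle must cross between columns, and the chosen temporization forces times to alternate at every column crossing, bounding the length of every temporal path and hence precluding \weakcycles. Relatedly, the three-arc $B$-paths are not ``free'' as you suggest---each sits on a $C_6$ with two vertical arcs and one $A$-arc, so Proposition~\ref{prop:C6_label} forces the pattern $\{2\},\{1\},\{2\}$ (or its complement), which is exactly what the paper assigns and what makes the global alternation argument go through.
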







\section{Conclusion}

We present multiple manners of extending the concept of a cycle in temporal graphs. For each of these, interesting problems are studied in terms of algorithmics and tractability (see again \Cref{fig:results}).

Throughout the paper, we considered non-strict temporal paths. When we consider strict temporal paths instead, i.e. paths that can only use strictly increasing times, \AcyclicTemporization becomes trivial: assign time 1 to each arc and all cycles are avoided (except for \weakcycles of size 2 which cannot be avoided in any manner). For \CycleDetection, we claim that all algorithms presented can be adapted easily for strict temporal paths. These algorithms use the computation of EAT as a subroutine, and so a quick modification to how these corresponding searches extend is sufficient. Finally, since our reduction for \textsc{Strong} \CycleDetection uses a proper time function (in other words, there are no two arcs with the same time that can concatenate to form a non-strict temporal path), it holds for the strict case as well. 

All problems in this paper have been solved (in terms of tractability or feasability), sometimes by covering the different cases such as when the answer of \AcyclicTemporization depends on the girth of the digraph. One case remains stubborn however: \textsc{Weak} \AcyclicTemporization in the case of girth 4. One manner to solve this case in the affirmative, is to prove that for any such a digraph, an ordering of the vertices exists such that \lexicographicTemporization according to this order would result in an acyclic temporization. Furthermore, we have been unable to obtain an example digraph of girth 4 in which a \weakcycle cannot be avoided, which may support the idea that an acyclic temporization always exists. We leave this as an open problem. 

\bibliographystyle{splncs04}
\bibliography{paper} 

\begin{thebibliography}{10}
\providecommand{\url}[1]{\texttt{#1}}
\providecommand{\urlprefix}{URL }
\providecommand{\doi}[1]{https://doi.org/#1}

\bibitem{AkridaMSR21}
Akrida, E.C., Mertzios, G.B., Spirakis, P.G., Raptopoulos, C.L.: The temporal
  explorer who returns to the base. J. Comput. Syst. Sci.  \textbf{120},
  179--193 (2021)

\bibitem{Bang-JensenBGP22}
Bang{-}Jensen, J., Bessy, S., Gon{\c{c}}alves, D., Picasarri{-}Arrieta, L.:
  Complexity of some arc-partition problems for digraphs. Theor. Comput. Sci.
  \textbf{928},  167--182 (2022). \doi{10.1016/J.TCS.2022.06.023},
  \url{https://doi.org/10.1016/j.tcs.2022.06.023}

\bibitem{Bartlang.book}
Bartlang, U.: Architecture and methods for flexible content management in
  peer-to-peer systems. Springer (2010)

\bibitem{BCP13}
Bermond, J.C., Cosnard, M., Pérennes, S.: Directed acyclic graphs with the
  unique dipath property. Theoretical Computer Science  \textbf{504},  5--11
  (2013). \doi{https://doi.org/10.1016/j.tcs.2012.06.015},
  \url{https://www.sciencedirect.com/science/article/pii/S0304397512005841},
  discrete Mathematical Structures: From Dynamics to Complexity

\bibitem{BumpusM23}
Bumpus, B.M., Meeks, K.: Edge exploration of temporal graphs. Algorithmica
  \textbf{85}(3),  688--716 (2023)

\bibitem{Cormen.book}
Cormen, T.H., Leiserson, C.E., Rivest, R.L., Stein, C.: Introduction to
  algorithms. MIT press (2022)

\bibitem{Erlebach0K21}
Erlebach, T., Hoffmann, M., Kammer, F.: On temporal graph exploration. J.
  Comput. Syst. Sci.  \textbf{119},  1--18 (2021)

\bibitem{ErlebachS23}
Erlebach, T., Spooner, J.T.: Parameterised temporal exploration problems. J.
  Comput. Syst. Sci.  \textbf{135},  73--88 (2023)

\bibitem{FHW.80}
Fortune, S., Hopcroft, J., Wyllie, J.: The directed subgraph homeomorphism
  problem. Theoretical Computer Science  \textbf{10}(2),  111--121 (1980)

\bibitem{Getal.14}
Ganian, R., Hlin{\v{e}}n{\`y}, P., Kneis, J., Langer, A., Obdr{\v{z}}{\'a}lek,
  J., Rossmanith, P.: On digraph width measures in parameterized algorithmics.
  Discrete Applied Mathematics  \textbf{168},  88--107 (2014)

\bibitem{HaagMNR22}
Haag, R., Molter, H., Niedermeier, R., Renken, M.: Feedback edge sets in
  temporal graphs. Discret. Appl. Math.  \textbf{307},  65--78 (2022).
  \doi{10.1016/J.DAM.2021.09.029},
  \url{https://doi.org/10.1016/j.dam.2021.09.029}

\bibitem{Holme.15}
Holme, P.: Modern temporal network theory: a colloquium. The European Physical
  Journal B  \textbf{88}(9), ~234 (2015)

\bibitem{JRST.82}
Johnson, T., Robertson, N., Seymour, P.D., Thomas, R.: Directed tree-width.
  Journal of Combinatorial Theory, Series B  \textbf{82}(1),  138--154 (2001)

\bibitem{klobas_et_al:LIPIcs.MFCS.2022.62}
Klobas, N., Mertzios, G.B., Molter, H., Spirakis, P.G.: {The Complexity of
  Computing Optimum Labelings for Temporal Connectivity}. In: Szeider, S.,
  Ganian, R., Silva, A. (eds.) 47th International Symposium on Mathematical
  Foundations of Computer Science (MFCS 2022). Leibniz International
  Proceedings in Informatics (LIPIcs), vol.~241, pp. 62:1--62:15. Schloss
  Dagstuhl -- Leibniz-Zentrum f{\"u}r Informatik, Dagstuhl, Germany (2022).
  \doi{10.4230/LIPIcs.MFCS.2022.62},
  \url{https://drops.dagstuhl.de/entities/document/10.4230/LIPIcs.MFCS.2022.62}

\bibitem{klobas_et_al:LIPIcs.SAND.2024.16}
Klobas, N., Mertzios, G.B., Molter, H., Spirakis, P.G.: {Temporal Graph
  Realization from Fastest Paths}. In: Casteigts, A., Kuhn, F. (eds.) 3rd
  Symposium on Algorithmic Foundations of Dynamic Networks (SAND 2024). Leibniz
  International Proceedings in Informatics (LIPIcs), vol.~292, pp. 16:1--16:18.
  Schloss Dagstuhl -- Leibniz-Zentrum f{\"u}r Informatik, Dagstuhl, Germany
  (2024). \doi{10.4230/LIPIcs.SAND.2024.16}

\bibitem{LVM.18}
Latapy, M., Viard, T., Magnien, C.: Stream graphs and link streams for the
  modeling of interactions over time. Social Network Analysis and Mining
  \textbf{8}(1), ~61 (2018)

\bibitem{MarinoS23}
Marino, A., Silva, A.: Eulerian walks in temporal graphs. Algorithmica
  \textbf{85}(3),  805--830 (2023)

\bibitem{mertzios2024realizingtemporaltransportationtrees}
Mertzios, G.B., Molter, H., Spirakis, P.G.: Realizing temporal transportation
  trees (2024), \url{https://arxiv.org/abs/2403.18513}

\bibitem{M.16}
Michail, O.: An introduction to temporal graphs: An algorithmic perspective.
  Internet Mathematics  \textbf{12}(4),  239--280 (2016)

\bibitem{Netal.13}
Nicosia, V., Tang, J., Mascolo, C., Musolesi, M., Russo, G., Latora, V.: Graph
  metrics for temporal networks. In: Temporal networks, pp. 15--40. Springer
  (2013)

\bibitem{Sapatnekar.book}
Sapatnekar, S.: Timing. Springer (2004)

\bibitem{shmulevich.book}
Shmulevich, I., Dougherty, E.R.: Probabilistic Boolean networks: the modeling
  and control of gene regulatory networks. SIAM (2010)

\bibitem{wu2014path}
Wu, H., Cheng, J., Huang, S., Ke, Y., Lu, Y., Xu, Y.: Path problems in temporal
  graphs. Proceedings of the VLDB Endowment  \textbf{7}(9),  721--732 (2014)

\bibitem{xuan2003computing}
Xuan, B.B., Ferreira, A., Jarry, A.: Computing shortest, fastest, and foremost
  journeys in dynamic networks. International Journal of Foundations of
  Computer Science  \textbf{14}(02),  267--285 (2003)

\end{thebibliography}

\appendix

\section{Omitted Proofs for \CycleDetection}

\subsection{Proof of~\Cref{prop:weak_cycle_detection_poly}}
\begin{proof}
In order to detect a {\weakcycle}, it suffices to test for every pair $x,y\in V(\mathcal{D})$, whether $x$ reaches $y$ and $y$ reaches $x$. This means determining whether the earliest arrival times between these vertices are finite. Indeed, suppose this is the case and let $P_{xy} = (v_1 = x,t_1,\ldots,t_p,v_{p+1}=y)$ (resp. $P_{yx} = (v'_1 = y,t'_1,\ldots,t'_q,v'_{q+1}=y)$) be a temporal path in $\mathcal{D}$ from $x$ to $y$ (resp. from $y$ to $x$). If $P_{xy}$ and $P_{yx}$ intersect only in $x$ and $y$, we are done. So suppose that they intersect in at least one other vertex and let $i$ be minimum such that $v_i\in (V(P_{xy})\cap V(P_{yx}))\setminus\{x,y\}$.  Now, let $Q$ be the temporal $x,v_i$-path contained in $P_{xy}$ and $Q'$ be the temporal $v_i,y$-path contained in $P_{yx}$. By the choice of $v_i$, observe that $Q'$ cannot intersect $Q$ on an internal vertex. It follows that the concatenation of $Q$ and $Q'$ forms a {\weakcycle} in $\mathcal{D}$.
 \end{proof}

\subsection{Proof of~\Cref{prop:simple_cycle_detection_poly}}
\begin{proof}
To detect a \simplecycle that starts in a vertex $v$ in a given temporal digraph $\mathcal{D} = (D, \lambda)$, we go over each arc 
$vr \in A(D)$, and check whether EAT$(r, v) \leq \max(\lambda(v, r))$. If this holds, then a \simplecycle exists, formed by the temporal path from $r$ to $v$ concatenated with arc 
$vr$. If no vertex $r$ can reach any of their incoming neighbors $v$ in time before using the arc 
$vr$, then no \simplecycle exists. Thus, it suffices to repeat the above procedure for every $v\in V(\mathcal{D})$.
 \end{proof}

\subsection{Proof of~\Cref{prop:aux_cycle_unicity_and_disjointness}}
\begin{proof}
    Let $\mathcal{A}$ be an auxiliary cycle of order $n$. First, note that given $v_i\in V(\mathcal{A})$ such that $i \neq n-1$,
    $W_{v_i} = (v_i, t_i^1, v_{i+1}, t_i^2, \dots, v_{n-1}, t_i^{n-i}, v_0, t_i^{n-i+1}, \dots, t_i^n, v_i$), where $t_i^j = j(n-1) -i$, is a \tpath{$v_i,v_i$}. 

    We now prove that $W_{v_i}$ is the only \tpath{$v_i,v_i$} for each vertex $v_i$. Notice that, in $W_{v_i}$, each time $t^j_i$ is the minimum possible that maintains the temporal path. Indeed, for any $j>1$, the times smaller than $t_i^j$ on the same arc are at most $t_i^j - n =(j-1)(n-1)-i-1$, which is smaller than $t_i^{j-1} = (j-1)(n-1)-i$. Moreover, if we take a time greater than $t_i^j$ on the same arc, the last time must be at least $t_i^n + n = n^2 - i$, by construction. Note that the last arc in the \tpath{$v_i,v_i$} is $e_i$, whose greatest time is equal to $(n-1)n-i$, which is smaller than $n^2 - i$. Therefore, it is not possible to take a time greater than $t_i^j$ for any $j$. $W_{v_i}$ is thus the only \tpath{$v_i, v_i$}.

    Additionally, we prove that $W_{v_i}$ and $W_{v_k}$ are disjoint, for $i \neq k$. Since $i \neq k$, it follows that $t_i^j \neq t_k^j$, and thus the times are all different.
 \end{proof}

\subsection{Proof of~\Cref{theorem:strong_cycle_detection_NPcomplete}}
\begin{proof}
    \textsc{Strong} \CycleDetection is in \NP, because a solution subgraph $\mathcal{C}$ can be verified to be a cycle in the underlying graph, and deciding whether each vertex reaches itself can be done by checking whether EAT$(v, u)$ in $\mathcal{C}$ is at most $\max(\lambda(u,v))$, for each arc $
    uv \in A(\mathcal{C})$, similarly to \Cref{prop:weak_cycle_detection_poly}.
	
	To prove this problem is \NP-hard, we reduce \ThreeSAT to it. Let the generic instance of \ThreeSAT be the CNF formula $\phi$ of $n$ variables $x_0, x_1, ..., x_{n-1}$ and $m$ clauses $C_0, C_1, ..., C_{m-1}$. Let the literals of clause $C_i$ be denoted as $\ell_{i,1}, \ell_{i,2}$, and $\ell_{i,3}$. 
	Let us build an instance of \CycleDetection as the temporal digraph $\mathcal{D}(\phi)$ as follows. Initially, add three auxiliary cycles $\mathcal{A}^1$, $\mathcal{A}^2$, and $\mathcal{A}^3$, all of order $4m+1$. Let the corresponding vertices of $\mathcal{A}^1$, $\mathcal{A}^2$ and $\mathcal{A}^3$ be referred to as $v^1_i$, $v^2_i$, and $v^3_i$ respectively, for every $i\in\{0,\ldots, 4m\}$. Note that for any three vertices $v^1_i$, $v^2_i$, and $v^3_i$,  $L^\circlearrowright(v^1_i) = L^\circlearrowright(v^2_i) = L^\circlearrowright(v^3_i)$ Let us simply refer to these times as $L^\circlearrowright(v_i)$ instead.
	Now, for each $i \in \mathbb{N}$, merge\footnote{We define merging of vertices in temporal digraphs as in static digraphs, and times on arcs of pre-merged vertices remain on corresponding arcs of post-merged vertices.} the three vertices $v^1_{4i}$, $v^2_{4i}$, and $v^3_{4i}$, and refer to this merged vertex as $v_{4i}$ (see \Cref{fig:reduction_strongcycledetection}). Note that this also merges vertices $v^1_{4m}$, $v^2_{4m}$, and $v^3_{4m}$ into vertex $v_{4m}$.
	For each clause $C_i$, add time 0 to arcs 
    $v^1_{4i+2}v^1_{4i+3}$, $v^1_{4i+3}v_{4(i+1)}$, $v^2_{4i+1}v^2_{4i+2}$, $v^2_{4i+3}v_{4(i+1)}$, $v^3_{4i+1}v^3_{4i+2}$, and $v^3_{4i+2}v^3_{4i+3}$.
	Finally, for each literal $\ell_{i, j}$ corresponding to some variable $x_k$, and each literal $\ell_{g, h}$ which corresponds to $\neg x_k$, remove from arc 
    $v_{4g}v^h_{4g+1}$ all times from $L^{\circlearrowright}(v_{4i+j})$. This concludes the transformation. 
    
	Let us now prove that a positive instance for \ThreeSAT remains a positive instance for \CycleDetection after the transformation (\ThreeSAT $\implies$ \CycleDetection), and vice versa, that a positive instance of \CycleDetection in the transformed instance implies a positive instance of \ThreeSAT before the transformation (\CycleDetection $\implies$ \ThreeSAT).  The key idea for both directions is that a literal $\ell_{i, j}$ is true, if and only if path $(v^j_{4i+1}, v^j_{4i+2}, v^j_{4i+3})$ is part of a \strongcycle.
	
	\ThreeSAT $\implies$ \CycleDetection:\\
	It is clear that all vertices of the form $v_{4i}$ (i.e. all merged vertices, or all vertices without superscript) must be part of any solution of the transformed \CycleDetection instance, as no \strongcycle exists which doesn't use them. Let us select these vertices as an incomplete solution $S'$ for the \CycleDetection instance. Suppose a solution $S$ for the \ThreeSAT formula $\phi$. For each clause $C_i$, find a literal which is true according to $S$ (there must be at least one since $S$ is a solution). Suppose it is literal $\ell_{i,j}$. Add to $S'$ the vertices $v^j_{4i+1}$, $v^j_{4i+2}$, and $v^j_{4i+3}$. After having done this for each clause, we now claim that $S'$ (or rather, the induced temporal subgraph $\mathcal{D}[S']$) is a solution for the transformed \CycleDetection instance. 
	It should be clear that $S'$ induces a cycle. All that remains is to prove that all vertices on this cycle can reach themselves:
	\begin{itemize}
		\item Vertex $v_{4m}$: vertices $v^k_{4m}$ by definition have time 0 on the outgoing arcs, which vertex $v_{4m}$ retains after the transformation, on arc 
        $v_{4m}v_0$. This implies that $v_{4m}$ can reach $v_0$ before the latter reaches anything. By construction, if $v_0$ can reach itself in $S'$, it must do so by passing through $v_{4m}$, meaning $v_{4m}$ could reach itself as well. The bullet point below shows how $v_0$ reaches itself in $S'$. 
		\item Vertices of the form $v_{4i}$: by definition all $v^k_{4i}$ could reach themselves in $\mathcal{A}^k$ through times $L^\circlearrowright(v_{4i})$. Since the transformation only merged vertices and only removed times which are not part of $L^\circlearrowright(v_{4i})$, the self-reachability of $v^k_{4i}$ and thus of $v_{4i}$ was not altered by the transformation. What's more, since one of the three outgoing paths from any $v_{4j}$ to any $v_{4(j+1)}$ is included in $S'$, all $v_{4i}$ can still reach themselves in $S'$ since all three paths are identical in terms of times $L^\circlearrowright(v_{4i})$.
		\item Vertices of the form $v^j_{4i+j}$: if these vertices are part of $S'$, then it implies that literal $\ell_{i, j}$ is set to true in $S$. 
		Note first that these vertices used times $L^\circlearrowright(v_{4i+j})$ to reach themselves in $\mathcal{A}^j$. The vertex merging in the transformation did not alter the self-reachability of $v^j_{4i+j}$. What's more, since $S$ evaluates $\ell_{i,j}$ to true, it does not evaluate any contradicting literal $\ell_{g,h} = \neg \ell_{i, j}$ to true. This means that no arc $(v_{4g}, v^h_{4g+1})$ is part of $\mathcal{D}[S']$, and so no times of $L^\circlearrowright(v_{4i+j})$ have been removed in $\mathcal{D}[S']$, which ensures it can reach itself still.
		\item Other vertices in $S'$: by construction, only vertices of the form $v^k_{4i+j}$ such that $k \neq j$ and $v^j_{4i+j} \in S'$, are concerned by this bullet point. Similarly as to $v_{4m}$, our transformation allows these vertices to reach either $v^j_{4i+j}$ or $v_{4(i+1)}$ through times 0. We have proven that the latter vertices are able to reach themselves in $S'$ in the above bullet points, which by construction means that the former can reach themselves as well.
	\end{itemize}
	
	\CycleDetection $\implies$ \ThreeSAT:\\
	Suppose a solution of the transformed \CycleDetection instance to be \strongcycle $\mathcal{D}[S']$. Note again that this solution must include all vertices $v_{4i}$, since no cycle exists without these vertices. Note also that, due to $S'$ inducing a cycle, if some vertex $v^k_{4i+j}$ is in $S'$, then automatically all vertices between $v_{4i}$ and $v_{4(i+1)}$ that have superscript $k$ must also be in $S'$, and the vertices that have superscript $\neq k$ cannot be in $S'$.
	Let the solution $S$ for the \ThreeSAT instance be constructed as follows. For each clause $C_i$, determine what superscript the vertices between vertex $v_{4i}$ and vertex $v_{4(i+1)}$ have. Suppose it is $k$. Set the literal $\ell_{i, k}$ to true. After having done this for each clause, we now claim that $S$ is a solution for the initial \ThreeSAT instance. If some variable has not been assigned true or false, set it to false.  
	It should be clear that all clauses evaluate to true with $S$, since we constructed $S$ by assigning true to one of the literals of each clause. What remains to be proven is that the assignment $S$ does not assign true to both some variable $x_i$ and the negation $\neg x_i$. 
    By construction of $\mathcal{D}$, and by the structure pointed out in \Cref{prop:aux_cycle_unicity_and_disjointness}, we know that no vertices $v^j_{4i+j}$ and $v^h_{4g+h}$ can be part of $S'$ if literals $\ell_{i, j} = \neg \ell_{g, h}$, since otherwise $v^j_{4i+j}$ cannot reach itself as it must use arc 
    $v_{4g}v^h_{4g+1}$ on which times $L^\circlearrowright(v_{4i+j})$ are missing. No contradicting literals can thus be assigned true in the construction of $S$, as the corresponding vertices can not both be part of $S'$. 
 \end{proof}

\subsection{Fixed-parameter tractability wrt lifetime: 
Proof of~\Cref{theorem:strong_cycle_detection_fpt}: 
}

Here we give the details and proofs corresponding to the fixed-parameter tractability section concerning \textsc{Strong} \CycleDetection. Let us start with a pseudo-code implementation. We choose a list structure for the search path $P$, and besides the typical list and array operations, function $A$ returns the arcs adjacent to a given vertex, and the other undefined functions (e.g. \texttt{earliestAtLeast}) should be clear from the context and function name. 

\begin{algorithm}[]
\DontPrintSemicolon
\caption{\texttt{containsStrongCycle}}\label{algo:full_cycle_detection}
\SetKwInOut{Input}{Input}
\SetKwInOut{Output}{Output}

\Input{temporal digraph $\mathcal{D}$}
\Output{\texttt{true} if $\mathcal{D}$ contains a \strongcycle, \texttt{false} otherwise}	
\For{
	$a = rv$ \texttt{in} $A(\mathcal{D})$}{ 
	$P \gets $ \texttt{newList}$(r, v)$ \tcp*{search path}
	$T_r \gets$ $[0] * (\tau + 1)$ \tcp*{root timetable} 
	\For{$t \in [0, ..., \max(\lambda(a))]$}{
		$T_r[t] \gets \texttt{earliestAtLeast}(\lambda(a), t)$\tcp*[f]{EAT$_P^{\geq t}(r,v)$} 
	}
	$T_P \gets$ $[0] * (\tau + 1)$ \tcp*{path timetable} 
	$T_P[\tau] \gets \texttt{min}(\lambda(a))$ \tcp*[]{LDT$_P(r,r)$ and EAT$_P(r,v)$} 
	$B \gets ([\texttt{false}] * m) * 4^{(\tau+1)^2}$ \tcp*{blocking matrix}
	\lIf{$\texttt{searchStrongCycle}(\mathcal{D}, P, T_r, T_P, B)$}
	{
		\Return \texttt{true}
	}
}
\Return \texttt{false}\;
\end{algorithm}	

\begin{algorithm}[]
\DontPrintSemicolon
\caption{\texttt{searchStrongCycle} (see also \Cref{fig:strong_cycle_search})}\label{algo:search_full_cycle}
\SetKwInOut{Input}{Input}
\SetKwInOut{Output}{Output}

\Input{temporal digraph $\mathcal{D}$, search path $P$, root and path timetables $T_r$ and $T_P$, and blocking matrix $B$}
\Output{\texttt{true} iff the current search path $P$ results in a strong-cycle}

$u \gets \texttt{last}(P)$\;
\For{
	$a = uv$ \texttt{in} $A(u)$}{ 
	\If(\tcp*[f]{$a$ doesn't block $T_r$ and $T_P$}){\texttt{not} $B[\texttt{order}(a)][\texttt{order}(T_r, T_P)]$}{
		$(T'_{r}, T'_{P},$ extended) $\gets \texttt{extend}(T_r, T_p, \lambda(a))$ \;
		\If{$extended$}
		{
			\lIf{$v \in P$}{\Return{\texttt{true}}}\label{line:returnTrue}
			\texttt{add}$(P, v)$\;
			\lIf{\texttt{searchStrongCycle}$(\mathcal{D}, P, T'_r, T'_P, B)$}
			{
				\Return \texttt{true} 
			}
			\texttt{remove}$(P, v)$\;
		}
		$B[\texttt{order}(a)][\texttt{order}(T_r, T_P)] \gets \texttt{true}$ \tcp*[f]{block $T_r$ and $T_P$ on $a$}
	}
}
\Return \texttt{false}\;
\end{algorithm}	

\begin{algorithm}[]
\DontPrintSemicolon
\caption{\texttt{extend}}\label{algo:extend}
\SetKwInOut{Input}{Input}
\SetKwInOut{Output}{Output}

\Input{Root timetable $T_r$, path timetable $T_P$, and arc $a = uv$}
\Output{Root timetable $T'_{r}$ and path timetable $T'_{P}$, corresponding to the input timetables extended over $a$, and a boolean value that flags \texttt{false} if the path timetable cannot extend correctly}

$T'_{r} \gets \texttt{copy}(T_r)$\;
$T'_{P} \gets \texttt{copy}(T_P)$\;
\For{$t$ \texttt{in} $[1, ..., \tau]$}{ 
	\If{$T_r[t] > 0$ \texttt{and} \texttt{containsAtLeast}$(\lambda(a), T_r[t]+1)$}
	{
		$T'_r[t] \gets \texttt{earliestAtLeast}(\lambda(a), T_r[t]+1)$ \tcp*[f]{update EAT$_P^{\geq t}(r, v)$}
	}
	\lElse(){
		$T'_r[t] \gets 0$} 
	\If{$T_P[t] > 0$ \texttt{and} \texttt{containsBetween}$(\lambda(a), T_P[t], t+1)$}
	{
		$T'_P[t] \gets \texttt{earliestBetween}(\lambda(a), T_P[t], t+1)$\tcp*[f]{update EAT$_P(\_,v)$}
	}
	\lElse{\Return ($T_r, T_P, \texttt{false}$)}
}
$i \gets \texttt{maxIndexWithNonZeroValue}(T_r)$\tcp*[]{LDT$_P(r,u)$}
$T'_{P}[i-1] \gets \max(T'_{P}[i-1], \min(\lambda(a)))$\tcp*[]{add EAT$_P(u,v)$}
\Return $(T'_{r}, T'_{P}, \texttt{true})$\;
\end{algorithm}	

\begin{figure}[]
\centering
\begin{subfigure}[b]{.49\textwidth}
	\centering
	\includegraphics[width=.95\textwidth]{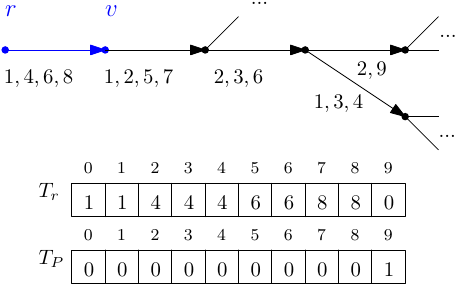}
	\caption{Starting search from root vertex $r$. 
		\label{fig:strong_cycle_search1}}
\end{subfigure}%
\begin{subfigure}[b]{.49\textwidth}
	\centering
	\includegraphics[width=.95\textwidth]{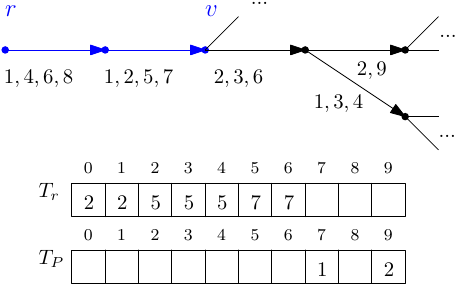}
	\caption{Search extending: update successful.  
		\label{fig:strong_cycle_search2}}
\end{subfigure}%

\begin{subfigure}[b]{.49\textwidth}
	\centering
	\includegraphics[width=.95\textwidth]{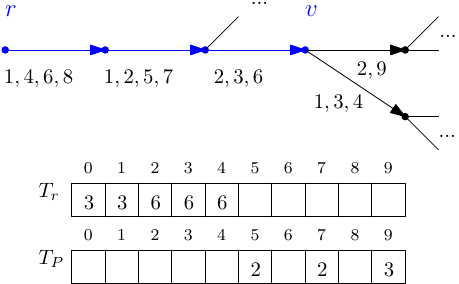}
	\caption{Another successful search extension.
		\label{fig:strong_cycle_search3}}
\end{subfigure}%
\begin{subfigure}[b]{.49\textwidth}
	\centering
	\includegraphics[width=.95\textwidth]{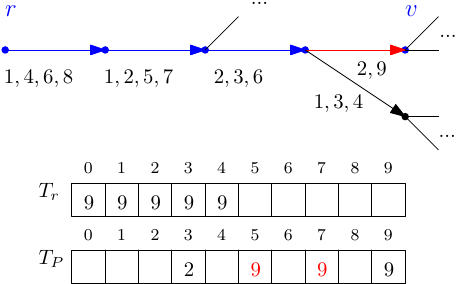}
	\caption{Update fails due to red time values.
		\label{fig:strong_cycle_search4}}
\end{subfigure}%

\begin{subfigure}[b]{.49\textwidth}
	\centering
	\includegraphics[width=.95\textwidth]{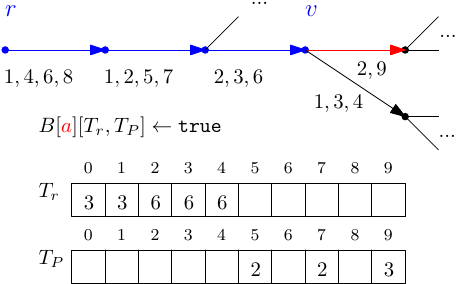}
	\caption{Backtrack and block $T_r$ and $T_P$ on $a$. 
		\label{fig:strong_cycle_search5}}
\end{subfigure}%
\begin{subfigure}[b]{.49\textwidth}
	\centering
	\includegraphics[width=.95\textwidth]{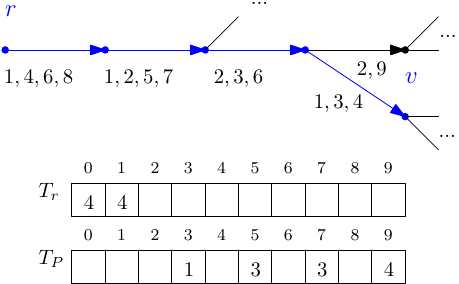}
	\caption{Search extends in different direction. 
		\label{fig:strong_cycle_search6}}
\end{subfigure}%

\caption{
	Some key operations of \Cref{algo:search_full_cycle} on a simplified example: starting and extending search, updating timetables, backtracking if update fails, and blocking timetables on arcs through the blocking matrix. The search path is shown in blue, and in red is shown an arc $a$ through which the search extends unsuccessfully (and backtracks). For clarity and simplicity in the figures, we have that: empty array cases are assumed to contain 0; the \texttt{order} functions are omitted for the blocking matrix update; and for each extension shown, we assume the arc $e$ did not block the corresponding timetables, i.e. $B[\texttt{order}(e)][\texttt{order}(T_r,T_P)] = \texttt{false}$.
		\label{fig:strong_cycle_search}}
	\end{figure}
	


To prove correctness, we use the following technical lemma which is essential for proving our blocking technique doesn't hinder the detection of a strong-cycle. 

\searchpathindependence*



\begin{proof}
The values of the root timetable $T_r$ are updated depending on the times of arc $a$ only (line 5 in \Cref{algo:extend}) and so do not depend on $P$ or $Q$. Similarly, all values in $T_P$ update solely depending on the times of arc $a$ (see line 8 in \Cref{algo:extend}). Lastly, when a new value is added to $T_P$ (in lines 10 and 11 of \Cref{algo:extend}), they depend only on the root timetable $T_r$, the extended path timetable $T'_P$, which, as covered above, extends identically between $P$ and $Q$, and on the times of arc $a$. Hence, both timetables are the same between search paths $P$ and $Z$.
 \end{proof}

\begin{lemma}[Correctness]
\label{lemma:full_algo_correctness}
\Cref{algo:full_cycle_detection} returns \texttt{true} if and only if the input temporal digraph contains a strong-cycle. 
\end{lemma}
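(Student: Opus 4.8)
The plan is to prove the two implications separately: for soundness (``if the algorithm returns \texttt{true}, then $\mathcal D$ has a \strongcycle{}'') I would argue through an invariant on the timetables, and for completeness (the converse) through a ``blocking never hides a solution'' argument resting on \Cref{lemma:search_path_independence}; termination will then come for free from the size of the blocking matrix.

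\textbf{Soundness.} First I would prove, by induction on the depth of recursion in \texttt{searchStrongCycle}, that at the start of every call with data $(\mathcal D,P,T_r,T_P,B)$ the timetables carry exactly the announced meaning: writing $v=\texttt{last}(P)$, we have $T_r[x]=EAT_P^{\ge x}(r,v)$ for all $x$ (a $0$ entry meaning ``no such temporal path along $P$''), and $T_P[x]=y\neq 0$ iff there is $u\in V(P)$ with $LDT_P(r,u)=x+1$ and $EAT_P(u,v)=y$, the slot $T_P[\tau]$ also covering the case $u=r$. The base case is the initialisation block of \Cref{algo:full_cycle_detection}; the inductive step is a line-by-line check of \Cref{algo:extend}, where the updated entries of $T'_r$ and the pre-existing entries of $T'_P$ only read the times of the explored arc $a=uv$, and the single new entry of $T'_P$ (placed just below $\texttt{maxIndexWithNonZeroValue}(T_r)$) records $LDT_P(r,u)$ together with $EAT(u,v)$ through $a$. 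From this, \texttt{extend} returns \texttt{true} precisely when every nonzero entry of $T'_P$ is at most its own index, i.e.\ when every $u\in V(P)$ still reaches the new last vertex within its deadline. Now if the algorithm outputs \texttt{true}, some call reached line~\ref{line:returnTrue}, so an arc $a=uv$ with $v\in V(P)$ was extended successfully; then $P$ together with $a$ contains a cycle $C$ of the underlying digraph, and the invariant yields, for each $w\in V(C)$, a temporal walk along $C$ from $w$ to the closing vertex arriving no later than the latest departure that still lets the remaining arcs of $C$ carry $w$ back to itself. Since the model is non-strict and the two halves use disjoint arcs of $C$, they concatenate into a \tpath{$w,w$} with edge set exactly $E(C)$, so $C$ is a \strongcycle{}.

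\textbf{Completeness.} Assume $\mathcal D$ has a \strongcycle{} $C$. I would fix some $x\in V(C)$ and a \tpath{$x,x$} $W$ with $E(W)=E(C)$, and let $a_r$ be its first arc; I will show that the search launched by \Cref{algo:full_cycle_detection} from $r=x$ with starting arc $a_r$, if it follows the arcs of $W$ in order, extends successfully at every step and finally triggers line~\ref{line:returnTrue}, so the task reduces to proving that no step along $W$ is forbidden by $B$. Call a triple $(a,T_r,T_P)$ \emph{completable} if from state $(T_r,T_P)$ some continuation that first extends through $a$ ends in a \texttt{true} return. The key claim is that \emph{no completable triple is ever set to \texttt{true} in $B$}, which I would prove by induction on the order in which $B$-entries are set, using: (i) by the timetable invariant, when \texttt{extend} fails on $a$ with $(T_r,T_P)$ the encoded temporal paths genuinely cannot be prolonged through $a$, so $(a,T_r,T_P)$ is not completable; and (ii) when \texttt{searchStrongCycle} backtracks over an arc $a$ after exhausting its out-arcs, \Cref{lemma:search_path_independence} ensures each out-arc was explored with timetables determined by $(T_r,T_P)$ and the arc alone, so the induction hypothesis makes all those continuations non-completable as well, the only possible discrepancy between our target search and a previous one being a vertex in the symmetric difference of their vertex sets, which by line~\ref{line:returnTrue} can only cause an \emph{earlier} \texttt{true} return, never a missed solution. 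Granting the claim, every prefix of $W$ induces a completable triple and is therefore never blocked, so the search from $(r,a_r)$ reaches line~\ref{line:returnTrue}. Termination is immediate: each of the $m$ rows of $B$ has $4^{(\tau+1)^2}$ cells and every exploration of an arc that does not end in \texttt{true} flips one cell of its row, so each arc is explored only finitely often; together with soundness and completeness, this proves the lemma.

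\textbf{Main obstacle.} I expect the delicate point to be the ``no completable triple is blocked'' claim, namely making precise how the genuine dependence of a search path on its already-visited vertices is reconciled with the vertex-obliviousness granted by \Cref{lemma:search_path_independence}. The timetable invariant, the non-strict concatenation, and termination should all be routine bookkeeping once this is in place.
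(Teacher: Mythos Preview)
Your proposal is correct and follows essentially the same approach as the paper's own proof: both directions rest on the intended semantics of the timetables for soundness and on \Cref{lemma:search_path_independence} for completeness. Your presentation is somewhat more structured than the paper's---you make the timetable invariant an explicit induction and package the completeness argument as a ``no completable triple is ever blocked'' claim, whereas the paper argues more informally that a search $Q$ reaching the same timetable state as the canonical search along $C$ must either self-intersect (returning \texttt{true}) or extend identically to the canonical search and eventually hit the root---but the underlying mechanism is the same, and you correctly flag the one genuinely delicate point (reconciling the vertex-dependence of search paths with the vertex-obliviousness of \Cref{lemma:search_path_independence}) as the main obstacle.
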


\begin{proof}
If \Cref{algo:full_cycle_detection} returns \texttt{true}, then it must come from line \ref{line:returnTrue} in \Cref{algo:search_full_cycle}, meaning some search path intersected itself. Either it has done so with the root vertex, or with a vertex of the search path. 
If it's with the root, then the corresponding full-cycle is exactly the search path and we know that each vertex can reach itself through this cycle thanks to the path timetable, which tracks exactly this property throughout the search.
If instead it intersected with a path vertex $u$, then the full-cycle is composed of $u$ and the partial search path going out from it, until it leads back to $u$. This must be a full-cycle since the non-zero values $T_P[x] = y$ in the path timetable $T_P$ indicate that for each corresponding vertex in this cycle, say $u'$, $u'$ can reach $v=u$ by time $T_P[x]$, and also, since there exists a temporal path from root $r$, through vertex $u$, to vertex $u'$, that leaves at time $x\geq T_P[x]$, there must also exist a temporal path from $u$ to $u'$ that leaves at a later time $x' \geq x$. 

Suppose the input temporal digraph contains a strong-cycle $C$, then we claim that our algorithm returns \texttt{true}. Take an arbitrary arc 
$rv$ of $C$. Suppose that our algorithm starts a search from it (the only reason it would not start such a search at some point, is if it already has returned \texttt{true} for some other reason). In the best case scenario, the search path $P$ follows exactly along the arcs of $C$. Since we know this search path corresponds to a strong-cycle, we know by definition that the vertices can all reach $r$ and then reach back to themselves in $C$, which means that the timetables of $P$ must extend without issue at every new arc exploration, as they (eventually) represent the vertices' reachability to and from $r$. Hence, in this best case scenario, the algorithm detects the strong-cycle and returns \texttt{true}. 
In any other case, since our algorithm has a blocking technique that blocks search paths depending on their timetables, any search path which has different timetables will not block the eventual exploration of this specific path (and thus the detection of the strong-cycle). When another search $Q$ with the exact same timetables explores some arc $a$ of $C$ however, then either $Q$ intersects itself and returns \texttt{true}, or it does not intersect itself and so by the search path independence lemma (\Cref{lemma:search_path_independence}), $Q$ extends in the exact same manner as $P$, meaning $Q$ will find $r$ as well and return \texttt{true}.
 \end{proof}

\begin{lemma}[Complexity]
\label{lemma:full_algo_complexity}
\Cref{algo:full_cycle_detection} runs in time $O(|\mathcal{D}|^{O(1)} \times f(\tau))$, for some function $f$, and with $\tau$ the lifetime of input temporal digraph $\mathcal{D}$. More precisely, we obtain a running time of $O(nm^2  \tau^2 4^{(\tau + 1)^2})$.
\end{lemma}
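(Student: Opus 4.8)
The plan is to bound the work done by \Cref{algo:full_cycle_detection} by decomposing it into the outer loop over starting arcs, and, for each starting arc, the total number of \texttt{extend} calls made by the recursive \texttt{searchStrongCycle} procedure. The outer \texttt{for} loop runs $m$ times, once per arc $a_r = rv$; inside each iteration the initialization of $T_r$, $T_P$ and $B$ costs $O(\tau)$ plus $O(m \cdot 4^{(\tau+1)^2})$ for allocating the blocking matrix. The dominant cost is the recursive search, so the heart of the argument is: \textbf{how many times can an arc $a=uv$ be explored (i.e. have \texttt{extend} invoked on it) across a single search rooted at $r$?}

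The key observation — this is the crux, and it rests on the blocking mechanism together with \Cref{lemma:search_path_independence} — is that every exploration of an arc $a$ that does \emph{not} immediately return \texttt{true} ends by setting some entry $B[\texttt{order}(a)][\texttt{order}(T_r,T_P)]$ from \texttt{false} to \texttt{true}, namely the entry indexed by the timetable pair $(T_r,T_P)$ with which that exploration \emph{entered} $a$. Crucially, \Cref{lemma:search_path_independence} guarantees that two explorations of $a$ entering with the same timetable pair behave identically, so once an entry is set to \texttt{true} the guard \texttt{not }$B[\texttt{order}(a)][\texttt{order}(T_r,T_P)]$ forbids any further exploration of $a$ with that same pair; hence distinct non-returning explorations of $a$ use distinct entries of the row $B[\texttt{order}(a)]$. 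Since that row has $4^{(\tau+1)^2}$ entries, $a$ is explored at most $4^{(\tau+1)^2}$ times (plus possibly one final successful exploration that returns \texttt{true} and ends everything). Summing over all $m$ arcs, a single rooted search performs at most $O(m \cdot 4^{(\tau+1)^2})$ calls to \texttt{extend}. Each call to \texttt{extend} runs a loop of length $\tau$ doing $O(\tau)$-ish work per iteration on $\lambda(a)$ (the \texttt{earliestAtLeast}/\texttt{containsBetween} style queries), plus the copying of the two length-$(\tau{+}1)$ timetables, for a total of $O(\tau^2)$ per call; one also needs to note that the $\texttt{order}$ lookups and the per-node bookkeeping (list \texttt{add}/\texttt{remove}, the \texttt{for} over $A(u)$) are absorbed into this bound or into the $m$ factor. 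Multiplying: one rooted search costs $O(m\tau^2 4^{(\tau+1)^2})$, and there are $m$ roots, giving $O(m^2 \tau^2 4^{(\tau+1)^2})$; if one is more careful and observes the outer loop of \Cref{algo:full_cycle_detection} iterates over arcs but effectively over roots one can present the stated $O(nm^2\tau^2 4^{(\tau+1)^2})$, which in any case is of the form $|\mathcal{D}|^{O(1)}\cdot f(\tau)$ with $f(\tau)=\tau^2 4^{(\tau+1)^2}$, as required.

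The step I expect to be the main obstacle is making the per-arc exploration bound fully rigorous: one must verify that every path through \texttt{searchStrongCycle} that explores $a$ and does not return \texttt{true} genuinely terminates by writing \texttt{true} into the blocking row of $a$ at the \emph{entry} timetable — including the subtle case where the \texttt{extend} call fails (the \texttt{extended} flag is \texttt{false}) and the case where the recursive call returns \texttt{false} after backtracking — and that \Cref{lemma:search_path_independence} applies in exactly those situations so that a blocked entry is never needed again. One must also be slightly careful that the timetables are restored correctly on backtracking (handled by copying, so the recursion passes $T'_r,T'_P$ downward while the caller keeps $T_r,T_P$), so that the entry blocked on return of a child call is indeed the one with which the parent entered $a$. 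Once this is pinned down, the remaining arithmetic — counting $\tau$-length loops, timetable copies, and the $m$ outer iterations — is routine and yields the claimed polynomial-times-$f(\tau)$ running time.
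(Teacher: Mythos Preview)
Your approach is essentially the paper's: bound the number of times any arc $a$ is explored via the blocking row $B[\texttt{order}(a)]$ (each exploration that does not return \texttt{true} flips one entry from \texttt{false} to \texttt{true}, so at most $4^{(\tau+1)^2}$ explorations per arc), charge $O(\tau^2)$ per \texttt{extend} call, and multiply by the $m$ outer iterations. The core charging argument is correct and matches the paper.

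There is, however, an accounting gap concerning the $n$ factor. You assert that the ``\texttt{for} over $A(u)$'' in \texttt{searchStrongCycle} is absorbed into your $O(m\tau^2 4^{(\tau+1)^2})$ bound, and then guess that the paper's extra $n$ arises from the outer loop iterating ``over arcs but effectively over roots''. Both are wrong. The $n$ comes from \emph{inside} \texttt{searchStrongCycle}: each call loops over up to $n$ outgoing arcs of the current endpoint $u$, and this includes arcs that are already \emph{blocked}. Blocked iterations do \emph{not} flip any blocking entry, so your charging argument does not account for them. Since the number of calls to \texttt{searchStrongCycle} is $O(m\,4^{(\tau+1)^2})$ (one per successful extend that recurses) and each performs an $O(n)$ loop, one gets $O(nm\tau^2 4^{(\tau+1)^2})$ per rooted search and hence $O(nm^2\tau^2 4^{(\tau+1)^2})$ overall, which is exactly how the paper derives the bound. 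Your tighter $O(m^2\tau^2 4^{(\tau+1)^2})$ is not justified by the argument you give. As a minor aside, \Cref{lemma:search_path_independence} is not needed for the complexity bound: the guard forbids a repeat of the same $(T_r,T_P)$ on $a$ by plain inspection of the code, independent of whether the two searches would behave identically; the paper uses that lemma only for correctness.
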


\begin{proof}
In \Cref{algo:full_cycle_detection}, line 1 runs a loop of time $O(m)$, within which at line 4 a loop of time $O(\tau)$ is run, and then a call to \Cref{algo:search_full_cycle} is made. 
In \Cref{algo:search_full_cycle}, line 2 is a loop requiring time $O(n)$. 
A call to \Cref{algo:extend} is made, 
and (in the worst case) a recursive call to \Cref{algo:search_full_cycle} is made.
For \Cref{algo:extend}, the loop on line 3 repeats $O(\tau)$ times, and lines 4, 5, 7, and 8 all run in time $O(\tau)$ each, due to the functions called going over $\lambda(a)$. 
Line 10 goes over $T_r$, taking time $O(\tau)$, and line 11 goes over $\lambda(a)$, also taking $O(\tau)$ time.
\Cref{algo:extend} thus runs in time $O(\tau^2)$.
To understand the running time of \Cref{algo:search_full_cycle} concerning the recursive calls, note that this algorithm is describing the exploration of an arc at each recursive call. However, each arc can only be explored when the search path has corresponding timetables that are not blocked by the arc, and if it backtracks, then those timetables are blocked. This implies that an arc can only be explored $O(4^{(\tau + 1)^2})$ times, and thus \Cref{algo:search_full_cycle} can only be recursively called that many times per arc as well. \Cref{algo:search_full_cycle} thus has a running time (including recursive calls) of $O(nm \tau^2 4^{(\tau + 1)^2})$. 
Inserting this into \Cref{algo:full_cycle_detection} gives a total running time of $O(nm^2  \tau^2 4^{(\tau + 1)^2})$.
 \end{proof}

\Cref{lemma:full_algo_correctness} and \Cref{lemma:full_algo_complexity} together provide the proof of~\Cref{theorem:strong_cycle_detection_fpt}.

\strongcycledetectionfpt*


\section{Omitted Proofs for \AcyclicTemporization}

\subsection{Proof of~\Cref{prop:strongtemp}}
\begin{proof}
By construction, there is no cycle in timestep~1 nor in timestep~2. Now, if $C=(v_1,v_2,\ldots,v_q)$ is a cycle in $D$, suppose without loss of generality, that $\lambda(v_1v_2) =1$ and $\lambda(v_qv_1) = 2$. Then clearly $C$ does not contain any non-trivial \tpath{$v_q,v_q$}.
 \end{proof}

\subsection{Proof of~\Cref{prop:C4_label}}
\begin{proof}
First observe that if $\mathcal{D}$ has a cycle on 2 vertices, $(a,e_1,b,e_2,a)$, then whatever assignment we give to $e_1$ and $e_2$ will give us either a non-trivial temporal $a,a$-path or a non-trivial temporal $b,b$-path. Now consider that $\mathcal{D}$ has a cycle on  3 vertices, $C= (a,e_1,b,e_2,c,e_3,a)$. We can suppose, without loss of generality that $1\in \lambda(e_1)$. Indeed if this is not the case, then $C$ is contained in timestep 2, a contradiction as $\mathcal{D}$ has no {\simplecycle}s. Similarly, we can suppose that $2\in \lambda(e_3)$. As $\lambda(e_2)$ is non-empty, we get a non-trivial temporal $a,a$-path, a contradiction. 

For the second part, let $C= (a,e_1,b,e_2,c,e_3,d,e_4,a)$ be a cycle in $D$. First, we argue that $\lambda(e_1)\cap \lambda(e_2) = \emptyset$. Indeed, if it is not the case, then one can verify that in such case $(a,e_1,b,e_2,c)$ behave as a single arc and we can apply an argument analogous to the previous paragraph to arrive to a contradiction. This gives us actually that no two consecutive arcs of $C$ can be active in a single timestep of $\mathcal{D}$. As $\tau=2$ and $\lambda(e_i)\neq \emptyset$ for every $i\in [4]$, the proposition follows.
 \end{proof}

\subsection{Proof of~\Cref{thm:simpleacy}}
\begin{proof}
By Proposition~\ref{prop:simple_cycle_detection_poly}, we know that the problem is in $\NP$. Now, let $D$ be constructed as previously explained. We want to prove that $\phi$ has a NAE truth assignment if and only if $D$ admits a temporization $\lambda:E(D)\rightarrow 2^{[2]}$ such that $\tG = (D,\lambda)$ contains no {\simplecycle}s. 

First, suppose that $\phi$ has a NAE truth assignment. For each true variable $x_i$, assign time $\{1\}$ to the vertical arcs in $x_i$'s gadget and $\{2\}$ to the horizontal arcs. Do the opposite to the false variables. 
We first argue that this partial assignment does not contain {\simplecycle}s. Suppose otherwise and let $P = (v_1,t_1,v_2,\ldots,v_q,t_q,v_1)$ be a non-trivial \tpath{$v_1,v_1$} in $\tG$ (in other words, $(v_1,\ldots,v_q,v_1)$ is a {\simplecycle} in $D$). 
By construction, we know that $P$ is not contained in any $D_i$. Additionally, because each column $C_j$ forms an acyclic digraph, we get that $P$ must contain vertices of at least two distinct columns, which in turn implies that there exists $j\in [m-1]$ such that $P$ intersects $C_{2j}$. Suppose then that $i\in [n]$ and $k\in [q]$ are such that $v_k=a^i_{2j}$ and $v_{k+1}=b^i_{2j}$. This means that $v_{k-1}$ and $v_{k+2}$ are also within $D_i$, but are not in column $C_{2j}$. We then get that $t_{k-1} = t_{k+1}\neq t_{k}$. Because the rows are also acyclic, $P$ must contain another vertical arc. Whenever it happens, the times will have to alternate again, meaning that $P$  cannot be a temporal path. 

We now extend this assignment to the rest of the arcs of $D$ in a way as to not create any {\simplecycle}. So consider $j\in [m]$ and use the same notation as during the construction. If $e_1$ is assigned with time $\{1\}$, then assign to $c_jb^{i_1}_j$ time $\{2\}$ and to $a^{i_3}_jc_j$ time $\{1\}$; 
otherwise, assign to $c_jb^{i_1}_j$ time $\{1\}$ and to $a^{i_3}_jc_j$ time $\{2\}$.
Because we have a NAE assignment, either $e_2$ or $e_3$ are given a time distinct from $e_1$ and one can see that we then do not create cycles. 

Now suppose that $\lambda:E(D)\rightarrow 2^{[2]}$ is a temporization of $D$ such that $\tG = (D,\lambda)$ contains no {\simplecycle}s. By Proposition~\ref{prop:C4_label}, we get that all vertical arcs of $D_i$ are given the same time, as well as all horizontal arcs. Assign true to $x_i$ if the vertical arcs are given time $\{1\}$; otherwise, assign false. Now suppose that $c_j = (x_{i_1}\vee x_{i_2}\vee x_{i_3})$ is such that all variables of $c_j$ receive the same value. One can verify that whatever time is given to $a^{i_3}_jc_j$ and $c_jb_j^{i_1}$, we obtain a {\simplecycle}.
 \end{proof}

\subsection{Proof of \Cref{prop:C6_label}}
\begin{proof}
Note that a {\simplecycle} is also a {\weakcycle}. Hence $D$ has no cycles on less than $4$ vertices by \Cref{prop:C4_label}. Now, consider $C= (a,e_1,b,e_2,c,e_3,d,e_4,a)$ in $D$. Proposition~\ref{prop:C4_label} also gives us that $\lambda(e_1) =\lambda(e_3)$, and $\lambda(e_2) = \lambda(e_4)$. Suppose, without loss of generality, that $\lambda(e_1) = \{1\}$. Then $C$ is a {\weakcycle} as it contains the temporal paths $(a,1,b,2,c)$ and $(c,1,d,2,a)$. If $D$ has a cycle on 5 vertices, there must be two consecutive arcs which are active at the same time. These arcs behave like a single arc in the cycle and one can apply arguments similar to ones applied to cycles on length 4 to get a contradiction. 
Finally, for the second part, consider cycle $C = (a,e_1,b,e_2,c,e_3,d,e_4,e, e_5,f,e_6,a)$ in $D$. We can suppose again that no consecutive arcs are active at the same timestep as this would be similar to the cycle on 5 vertices. The only possibility therefore is for the odd arcs to be active in one timestep, say $i$, while the even arcs are active in the other timestep $j\in[2]\setminus \{i\}$. The proposition follows.
 \end{proof}

\subsection{Proof of \Cref{thm:weakacy}}
\begin{proof}
By Proposition~\ref{prop:weak_cycle_detection_poly}, we know that the problem is in $\NP$. Now, let $D$ be constructed as previously explained. We want to prove that $\phi$ has a NAE truth assignment if and only if $D$ admits a temporization $\lambda:E(D)\rightarrow 2^{[2]}$ such that $\tG = (D,\lambda)$ contains no {\weakcycle}s. 

First, suppose that $\phi$ has a NAE truth assignment. For each true variable $x_i$, assign time $\{1\}$ to the vertical arcs in $x_i$'s gadget, assign $\{2\}$ to the arcs between vertices of $A$ and assign times $\{2\}$, $\{1\}$ and $\{2\}$ respectively for the three arcs in the path from $b^i_{2j}$ to $b^i_{2j-1}$, as well as for the three arcs in the path from $b^i_{2j}$ to $b^i_{2j+1}$. Do the opposite to the false variables. Note that, similarly to the reduction for \simplecycle, this partial assignment does not create any \weakcycle. This holds because the addition of two new vertices between consecutive vertices of $B$ forces the inclusion of one path on 3 vertices whose arcs have alternating times distinct from the times of vertical arcs. 

Now, we extend this temporization to the rest of the arcs of $D$ without creating a \weakcycle. Consider $j \in [m]$ and use the same notation as before. If $e_1$ and $e_3$ have the same time, assign times to the arc leaving $a^{i_3}_j$ and to the arc arriving at $b^{i_1}_j$ with the opposite time, assigning times to the other two arcs with the same time as $e_1$. If $e_1$ and $e_3$ have distinct times, assign time to the other four arcs in the clause gadget with alternating times in such a way that the time of the arc arriving at $b^{i_1}_j$ is distinct from $e_1$. Because we have a NAE assignment, we know that there are exactly two consecutive arcs with the same time, and thus the cycle on $7$ vertices does not induce a \weakcycle.

Now suppose that $\lambda:E(D)\rightarrow 2^{[2]}$ is a temporization of $D$ such that $\tG = (D,\lambda)$ contains no {\weakcycle}s. By Proposition~\ref{prop:C4_label}, we get that all vertical arcs of $D_i$ are given the same time. Furthermore, we know that the arcs between vertices in $A$ are given the same time as the horizontal arcs leaving and arriving at vertices in $B$, together with the fact that the path between vertices in $B$ is alternating. Assign true to $x_i$ if the vertical times are equal to $\{1\}$; otherwise, assign false. Now suppose that $c_j = (x_{i_1}\vee x_{i_2}\vee x_{i_3})$ is such that all variables of $c_j$ receive the same value. Thus the cycle on $7$ vertices corresponding to $c_j$ has $3$ consecutive arcs assigned the same time. One can verify that whatever times are given to the other $4$ arcs of the cycle, we obtain a \weakcycle.
 \end{proof}

\end{document}